\title{On Stronger Calculi for QBFs\thanks{The work was supported by
    the Austrian Science Foundation (FWF) under grant
    S11409-N23. Partial results have been announced at the QBF
    Workshop 2014
    (http://www.easychair.org/smart-program/VSL2014/QBF-program.html).
}
}
\author{Uwe Egly}
\institute{
Institut f\"ur Informationssysteme 184/3,
Technische Universit\"at Wien, \\
Favoritenstrasse 9--11, A-1040 Vienna, Austria\\
 {email: \texttt{uwe@kr.tuwien.ac.at}}
}
\newcommand{\base}{{\sf Base}}
\newcommand{\ih}{{\sf IH}}
\newcommand{\step}{{\sf Step}}
\newcommand{\sccase}[1]{\noindent {\sc Case~#1}}
\newcommand{\scsubcase}[2]{\noindent {\sc Subcase~#1.#2}}
\newcommand{\emptyclause}{\Box}
\newcommand{\qres}{\mbox{\sf Q-res}}
\newcommand{\qures}{\mbox{\sf QU-res}}
\newcommand{\qdres}{\mbox{\sf Q(D)-res}}
\newcommand{\ldqres}{\mbox{\sf LDQ-res}}
\newcommand{\ldqures}{\mbox{\sf LDQU-res}}
\newcommand{\ldqupres}{\mbox{\sf LDQU{}$^+$-res}}
\newcommand{\ForallRed}{\mbox{\sf $\forall$R}}
\newcommand{\Rfo}{$\mbox{\sf R}_1$}
\newcommand{\IRcalc}{$\mbox{\sf IR-calc}$}
\newcommand{\IRcalcPM}[2]{$\mbox{\sf IR-calc}(#1,#2)$}
\newcommand{\IRcalcPMSubst}[3]{$\mbox{\sf IR-calc}(#1,#2,#3)$}
\newcommand{\IRcalcPMDSubst}[4]{$\mbox{\sf IR-calc}(#1,#2,#3,#4)$}
\newcommand{\IRMcalc}{$\mbox{\sf IRM-calc}$}
\newcommand{\Res}{$\mbox{\sf Res}$}
\newcommand{\Fac}{$\mbox{\sf Fac}$}
\newcommand{\Axiom}{$\mbox{\sf Axiom}$}
\newcommand{\Inst}{$\mbox{\sf Inst}$}
\newcommand{\InstOp}[2]{$\mbox{\sf inst}(#1,#2)$}
\newcommand{\Gqvetree}{$\mbox{{\sf Gqve}}^*$}
\newcommand{\quantifier}{{\sf Q}}
\newcommand{\propvars}{{\cal PV}}
\newcommand{\falsum}{\bot}
\newcommand{\verum}{\top}
\newcommand{\dom}[1]{\ensuremath\mathit{dom}(#1)}
\newcommand{\range}[1]{\ensuremath\mathit{rg}(#1)}
\newcommand{\sequence}[1]{\overline{#1}}
\newcommand{\complassign}{\ensuremath{\veebar}}
\newcommand{\is}{\ensuremath{\backslash}}
\newcommand{\restrict}[1]{\ensuremath{[#1]}}
\newcommand{\caC}{{\cal C}}
\newcommand{\caQ}{{\cal Q}}
\newcommand{\caT}{{\cal T}}
\newcommand{\caU}{{\cal U}}
\newcommand{\impl}{\rightarrow}
\newcommand{\lequiv}{\leftrightarrow}
\newcommand{\xor}{\oplus}
\newcommand{\lc}[1]{\mathit{lc}(#1)}
\newcommand{\size}[1]{\mathit{size}(#1)}
\newcommand{\dep}[1]{\mathit{dep}(#1)}
\newcommand{\var}[1]{\mathit{var}(#1)}
\newcommand{\hkb}{\ensuremath{\Psi}}
\newcommand{\hkbt}[1]{\ensuremath{\hkb_{#1}}}
\newcommand{\QBFtoPL}[3]{\llbracket\,#1\,\rrbracket_{#2}^{#3}}
\newcommand{\SkQBFtoPL}[3]{\mathit{Sk}\llbracket\,#1\,\rrbracket_{#2}^{#3}}
\newcommand{\EPRQBFtoPL}[3]{\mathit{EPR}\llbracket\,#1\,\rrbracket_{#2}^{#3}}
\newcommand{\level}[1]{{\mathit lv(#1)}}
\newcommand{\PHPText}{CPHP}
\newcommand{\PHPTextp}{DPHP}
\newcommand{\TPHPTextp}{TPHP}
\newcommand{\PHPVar}[2]{\mbox{$\mathsf{\PHPText}_{#1}^{#2}$}}
\newcommand{\PHPpVar}[2]{\mbox{$\mathsf{\PHPTextp}_{#1}^{#2}$}}
\newcommand{\TPHPpVar}[2]{\mbox{$\mathsf{\TPHPTextp}_{#1}^{#2}$}}
\renewcommand{\qed}{\hfill\ensuremath{\square}}
\newcommand{\card}[1]{\left\vert{#1}\right\vert}
\newcommand{\THROWIT}[1]{}
\begin{document}

\maketitle
 
\begin{abstract}
Quantified Boolean formulas (QBFs) generalize propositional formulas
by admitting quantifications over propositional variables.  QBFs can
be viewed as (restricted) formulas of first-order predicate logic and
easy translations of QBFs into first-order formulas exist.  We analyze
different translations and show that first-order resolution combined
with such translations can polynomially simulate well-known deduction
concepts for QBFs. Furthermore, we extend QBF calculi by the
possibility to instantiate a universal variable by an existential
variable of smaller level. Combining such an enhanced calculus with
the propositional extension rule results in a calculus with a
universal quantifier rule which essentially introduces propositional
formulas for universal variables. In this way, one can mimic a very
general quantifier rule known from sequent 
systems.
\end{abstract}

\section{Introduction}\label{sec:intro}

Quantified Boolean formulas (QBFs) generalize propositional formulas
by admitting quantifications over propositional variables.  QBFs can
be viewed in two different ways, namely (i) as a generalization of
propositional logic and (ii) as a restriction of first-order predicate
logic (where we interpret over a two element domain). A number of
calculi are available for QBFs: the ones based on variants of
resolution for QBFs
\cite{DBLP:journals/iandc/BuningKF95,DBLP:conf/cp/Gelder12,DBLP:journals/fmsd/BalabanovJ12,BWJ:SAT14},
the ones based on instantiating universal variables with truth
constants combined with  propositional resolution and an additional
instantiation rule \cite{DBLP:conf/mfcs/BeyersdorffCJ14}, and
different sequent systems
\cite{DBLP:journals/aml/CookM05,Krajicek:1995,ESW-cj-2009,DBLP:conf/sat/Egly12}.

In all these calculi (except the latter ones from
\cite{DBLP:journals/aml/CookM05,Krajicek:1995,DBLP:conf/sat/Egly12}),
the possibility to instantiate a given formula is limited. In purely
resolution-based calculi, formulas (or more precisely universal
variables) are never instantiated. In instantiation-based calculi,
instantiation is restricted to truth constants. In contrast, sequent
systems possess flexible quantifier rules, and (existential) variables
as well as (propositional) formulas can be used for instantiation with
tremendous speed-ups in proof complexity. This motivates why we are
interested in strengthening instantiation techniques for
instantiation-based calculi.

We allow to replace (some) universal variables $x$ not only by truth
constants but by existential variables left of $x$ in the quantifier
prefix. This approach mimics the effect of quantifier rules
introducing atoms in sequent calculi from \cite{DBLP:conf/sat/Egly12}.
We add a propositional extension principle (known from extended
resolution \cite{Tseitin:1968}), which enables the introduction of
propositional formulas for universal variables via extension variables
(or names for the formula).  Contrary to \cite{DBLP:conf/sat/Egly12},
where we proposed propositional extensions of the form $\exists q
(q\lequiv F)$ which can be eliminated if the cut rule is available in
the sequent calculus, such an elimination is not possible here for
which reason we have to use (classical) extensions.

\smallskip
\noindent
\emph{Contributions.}
\vspace*{-0.5em}
\begin{enumerate}
\item We consider different translations from QBFs to first-order
  logic \cite{DBLP:conf/cade/SeidlLB12} and provide a
  proof-theoretical analysis of the translation in combination with
  first-order resolution (\Rfo). We exponentially separate two
  variants of the translation in Theorem~\ref{thm:EPR-notpsim-SK}.

\item We show that such combinations can polynomially simulate
  Q-resolution with resolution over existential and universal
  variables (\qures{} \cite{DBLP:conf/cp/Gelder12}, 
  Theorem~\ref{thm:R1-psim-QUres}), Q-resolution (\qres{}
  \cite{DBLP:journals/iandc/BuningKF95}, 
  Corollary~\ref{cor:Rfo-psim-qres}) and the instantiation-based
  calculus \IRcalc{} \cite{DBLP:conf/mfcs/BeyersdorffCJ14}
  (Theorem~\ref{Rfo-psim-IRcalc},
  Corollary~\ref{cor:Rfo-EPR-psim-IRcalc}). The latter simulation
  provides a soundness proof for \IRcalc{} independent from strategy
  extraction.

\item We show in Theorem~\ref{thm:usual-calculi-npsim-Rfo} that
  neither \qres{} nor \qures, the long-distance Q-resolution variants
  \ldqres, \ldqures, \ldqupres{}
  \cite{DBLP:conf/iccad/ZhangM02,DBLP:journals/fmsd/BalabanovJ12,BWJ:SAT14},
  different in\-stan\-tiation-based calculi
  \cite{DBLP:conf/mfcs/BeyersdorffCJ14} nor \qdres{}
  \cite{SlivovskySzeider-SAT14} can polynomially simulate \Rfo{} with
  one of the considered translations.

\item We generalize \IRcalc{} by the possibility to instantiate
  universal variables not only with truth constants but also with
  existential variables (similar to the corresponding quantifier rule
  in \cite{DBLP:conf/sat/Egly12}). We show in
  Proposition~\ref{prop:IRcalcPM-npsim-IRcalcPMSubst} that this
  generalized calculus is actually stronger than the original one.

\item We combine generalized \IRcalc{} by a propositional extension
  rule \cite{Tseitin:1968,BCJ-AAAI-WS16}
  essentially enabling the introduction of Boolean functions (instead
  of atoms and truth constants) for universal variables.

\end{enumerate}

\noindent
\emph{Structure.}  In Sect.~\ref{sec:prel} we introduce necessary
definitions and notations. In Sect.~\ref{sec:diff-trans} different
translations from QBFs to (restrictions of) first-order logic
\cite{DBLP:conf/cade/SeidlLB12} are reconsidered. In
Sect.~\ref{sec:res-calculi} different calculi based on (variants of)
the resolution calculus are described. Here, we introduce our calculi
generalized from \IRcalc. In Sect.~\ref{sec:poly-sim} we present our
results on polynomial simulations between considered calculi and in
Sect.~\ref{sec:exp-sep} we provide exponential separations. In the
last section we conclude and discuss future research possibilities.

\section{Preliminaries}\label{sec:prel}

We assume familiarity with the syntax and semantics of propositional
logic, QBFs and first-order logic (see, e.g.,
\cite{Leitsch-resolution1997} for an introduction). We recapitulate
some notions and notations which are important for the rest of the
paper.

We consider a propositional language based on a set $\propvars$ of
Boolean variables and truth constants $\verum$ (true) and $\falsum$
(false), both of which are not in $\propvars$. A variable or a truth
constant is called \emph{atomic} and connectives are from $\{\neg,
\land, \lor, \impl, \lequiv, \xor\}$.
A \emph{literal} is a variable or its negation.  A \emph{clause} is a
disjunction of literals, but sometimes we consider it as a set of
literals. \emph{Tautological clauses} contain a variable and its
negation and the \emph{empty clause} is denoted by $\Box$.
Propositional formulas are denoted by capital Latin letters like $A,
B, C$ possibly annotated with subscripts, superscripts or primes.

We extend the propositional language by Boolean quantifiers.
Universal ($\forall^b$) and existential ($\exists^b$) quantification
is allowed within a QBF.  The superscript $b$ is used to distinguish
Boolean quantifiers from first-order quantifiers introduced later.
QBFs are denoted by Greek letters.  Observe that we allow non-prenex
formulas, i.e., quantifiers may occur deeply in a QBF.
An example for a non-prenex QBF is
$\forall^b p\, (p \impl \forall^b q \exists^b r \, (q\land r\land
s))$, where $p$, $q$, $r$ and $s$ are variables.  Moreover, free
variables (like $s$) are allowed, i.e., there might be occurrences of
variables in the formula for which we have no quantification. Formulas
without free variables are called \emph{closed}; otherwise they are
called \emph{open}. The \emph{universal} (\emph{existential}) closure
of $\varphi$ is $\forall^b x_1 \ldots \forall^b x_n \varphi$
($\exists^b x_1 \ldots \exists^b x_n \varphi$), for which we often
write $\forall^b \vec{X} \varphi$ ($\exists^b \vec{X} \varphi$) if
$\vec{X} = \{x_1, \ldots ,x_n\}$ is the set of all free variables in
$\varphi$.  A formula in \emph{prenex conjunctive normal form} (PCNF)
has the form $Q_1^bp_1\ldots Q_n^bp_n \, M$, where $Q_1^bp_1\ldots
Q_n^bp_n$ is the \emph{quantifier prefix}, $Q\in \{ \forall, \exists
\}$ and $M$ is the (propositional) \emph{matrix} which is in CNF.
Often we write a QBF as $Q_1^bX_1\ldots Q_k^bX_k \, M$ ($Q_i \neq
Q_{i+1}$ for all $i=1, \ldots , k-1$ and the elements of $\{X_1,
\ldots , X_k\}$ are pairwise disjoint). We define the \emph{level of a
  literal $\ell$}, $\level{\ell}$, as the index $i$ such that the
variable of $\ell$ occurs in $X_i$.  The \emph{logical complexity} of a
formula $\Phi$, $\lc{\Phi}$, is the number of occurrences of
connectives and quantifiers.

We use a first-order language consisting of (objects) {\em variables},
\emph{function symbols} (FSs), \emph{predicate symbols} (PSs),
together with the truth constants and connectives mentioned above.
Quantifiers $\forall$ and $\exists$ bind object variables.  \emph{
  Terms} and \emph{formulas} are defined according to the usual
formation rules.  We identify $0$-ary PSs with
propositional atoms, and $0$-ary FSs with
\emph{constants}. Clauses, tautological clauses and the empty clause
are defined as in the propositional case.

Let $V$ be the set of first-order variables and $T$ be the set of
terms.  A \emph{substitution} is a mapping $\sigma$ of type
$V\rightarrow T$ such that $\sigma(v) \neq v$ only for finitely many
variables $v\in V$.  We represent $\sigma$ by a finite set of the form
$\{v_1\backslash t_1, \ldots , v_n \backslash t_n\}$.  The
\emph{domain} of 
$\sigma$, $\dom{\sigma}$, is the set
$\{v \mid v\in V, \sigma(v) \neq v\}$. The \emph{range} of $\sigma$,
$\range{\sigma}$, is the set $\{\sigma(v) \mid v\in
\dom{\sigma}\}$. We call $\sigma$ a \emph{variable substitution} if
$\range{\sigma} \subseteq V$.  The \emph{empty} substitution
$\epsilon$ is denoted by $\{\}$. We often write substitutions
post-fix, e.g., we use $x\sigma$ instead of $\sigma(x)$.
Algebraically, substitutions define a monoid with $\epsilon$ being the
neutral element under the usual composition of substitutions.

Substitutions are extended to terms and formulas in the usual way,
e.g., $f(t_1,\ldots , t_n)\sigma = f(t_1\sigma,\ldots , t_n\sigma)$,
$(\neg) p(t_1,\ldots , t_n)\sigma = (\neg) p(t_1\sigma,\ldots ,
t_n\sigma)$, and $(F\circ G)\sigma = F\sigma \circ G\sigma$, where $f$
is an $n$-place FS, $p$ is an $n$-place PS, $t_1, \ldots , t_n$ are
terms, $F$ and $G$ are (quantifier-free) formulas and $\circ$ is a
binary connective.  For substitutions $\sigma$ and $\tau$, $\sigma$ is
\emph{more general than} $\tau$ if there is a substitution $\mu$ such
that $\sigma\mu = \tau$.
A substitution $\sigma$ is called a \emph{permutation} if $\sigma$ is
one-one and a variable substitution. A permutation $\sigma$ is called
a \emph{renaming} (substitution) of an expression $E$ (i.e., $E$ is a
term or a quantifier-free formula) if $\var{E}\cap \range{\sigma} =
\{\}$, where $\var{E}$ is the set of all variables occurring in $E$.
For an expression $G$, $G\sigma$ is a \emph{variant} of $G$ provided
$\sigma$ is a renaming substitution.

Let $E=\{E_1, \ldots , E_n\}$ be a non-empty set of expressions. A
substitution $\sigma$ is called a \emph{unifier of $E$} if
$\card{\{E_1\sigma, \ldots , E_n\sigma\}} = 1$. Unifier $\sigma$ is
called \emph{most general unifier} (mgu), if for every unifier $\tau$
of $E$, $\sigma$ is more general than $\tau$.

Let $P_1$ and $P_2$ be two proof systems.  $P_1$
\emph{polynomially simulates} (p-simulates) $P_2$ if there is a
polynomial $p$ such that, for every natural number $n$ and every
formula $\Phi$, the following holds. If there is a proof of
$\Phi$ in $P_2$ of size $n$, then there is a proof of $\Phi$ (or
a suitable translation of it) in $P_1$ whose size is less than $p(n)$.

\section{Different translations of QBFs to first-order logic}
\label{sec:diff-trans}

We introduce different translations of (closed) QBFs to (closed)
formulas in (restrictions of) first-order logic.  We start with the
basic translation from \cite{DBLP:conf/cade/SeidlLB12} in
Fig.~\ref{fig:translation1}.  Obviously, the QBF $\Phi$ and the
first-order formula $\QBFtoPL{\Phi}{p}{f}$ enjoy a very similar
structure. Especially the variable dependencies expressed by the
quantifier prefix are exactly the same.

\begin{figure}[t]
\begin{tabular}{ccccc}
$\QBFtoPL{\falsum}{p}{f} = p(f_{0})$ & \hspace{1em} &
$\QBFtoPL{\verum}{p}{f} = p(f_{1})$  & \hspace{1em} &
$\QBFtoPL{x}{p}{f} = p(x)$       
\\[1.5em]
\multicolumn{1}{c}{
$\QBFtoPL{\neg \Phi}{p}{f} = 
\neg \QBFtoPL{\Phi}{p}{f}$
} & &
\multicolumn{1}{c}{
$\QBFtoPL{\Phi_1 \circ \Phi_2}{p}{f} = 
\QBFtoPL{\Phi_1}{p}{f} \circ \QBFtoPL{\Phi_2}{p}{f}$} 
& & 
\multicolumn{1}{c}{
$\QBFtoPL{\quantifier^b x\, \Phi}{p}{f} = 
\quantifier x\, \QBFtoPL{\Phi}{p}{f}$
}\\
\end{tabular}
\caption{\label{fig:translation1} The translation of QBFs to
  first-order formulas. The connective $\circ$ is a binary connective
  present in both languages and $\quantifier \in \{\forall,
  \exists\}$.  The symbols $p$ and $f$ do not occur in the source QBF;
  $p$ is a unary predicate symbol and $f$ is used to construct
  constant and function symbols by indices.  }
\end{figure}

\begin{restatable}{prop}{QBFIsomorphic}
\label{prop:isomorphic}%
Let $\Phi$ be a (closed) QBF and let $\QBFtoPL{\Phi}{p}{f}$ be its
(closed) first-order translation. Then $\Phi \cong
\QBFtoPL{\Phi}{p}{f}$, i.e., $\Phi$ and $\QBFtoPL{\Phi}{p}{f}$ are
isomorphic. 
\end{restatable}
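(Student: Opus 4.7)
The plan is to prove the isomorphism by straightforward structural induction on $\Phi$, with the witnessing bijection being the natural correspondence between parse-tree nodes dictated by the clauses of Fig.~\ref{fig:translation1}. The translation is compositional and replaces each QBF construct by a first-order construct of identical arity and shape, so the two parse trees are literally built by the same set of production steps.

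Concretely, I would first spell out the bijection on atomic occurrences ($\falsum \mapsto p(f_0)$, $\verum \mapsto p(f_1)$, and each variable occurrence $x \mapsto p(x)$) and on non-atomic nodes ($\neg \mapsto \neg$, $\circ \mapsto \circ$ for each binary connective, $\forall^b x \mapsto \forall x$, $\exists^b x \mapsto \exists x$). The induction then runs on $\lc{\Phi}$. The three base cases ($\Phi \in \{\falsum, \verum, x\}$) are immediate from the first line of Fig.~\ref{fig:translation1}. For the inductive step in cases $\neg \Phi$, $\Phi_1 \circ \Phi_2$ and $\quantifier^b x\,\Phi$, the induction hypothesis provides isomorphisms for the immediate subformulas, and the translation clause extends them by a single matching root node, so the extended map is again an isomorphism.

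The one point that deserves explicit verification is that quantifier binding is preserved. Since the translation reuses the very same variable symbol $x$ (no renaming) and since, by the induction hypothesis, the free occurrences of $x$ in $\Phi$ correspond bijectively to the occurrences of $p(x)$ in $\QBFtoPL{\Phi}{p}{f}$, the scope of $\forall x$ (resp.\ $\exists x$) in $\QBFtoPL{\quantifier^b x\,\Phi}{p}{f}$ captures exactly the atoms corresponding to the occurrences that $\forall^b x$ (resp.\ $\exists^b x$) binds in $\quantifier^b x\,\Phi$. Beyond this routine bookkeeping step there is no real obstacle; the statement is essentially a direct reading of the defining clauses of the translation.
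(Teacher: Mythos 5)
Your proposal is correct and follows essentially the same route as the paper: an induction on $\lc{\Phi}$ with the three base cases from the first line of Fig.~\ref{fig:translation1} and the three inductive cases for $\neg$, binary connectives, and quantifiers, each extending the isomorphisms of the subformulas by one matching root node. Your explicit remark about preservation of quantifier binding is a small additional bookkeeping point that the paper leaves implicit, but it does not change the argument.
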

\noindent
The proof in the appendix is by induction on the logical complexity of
$\Phi$.

The basic translations from Fig.~\ref{fig:translation1} can be
extended to $\SkQBFtoPL{\Phi}{p}{f}$ generating a skolemized version
of $\QBFtoPL{\Phi}{p}{f}$. We restrict our attention here to QBFs in
PCNF.

\begin{definition}
Let $\Phi$ be a closed QBF in PCNF with matrix $M$ and let
$\QBFtoPL{\Phi}{p}{f}$ be its closed first-order translation.  For any
existential variable $a$ in the quantifier prefix of $\Phi$, let
$\dep{a}$ be the sequence of universal variables left of $a$ (in
exactly the same order in which they occur in the prefix). Let $f_a$
be the Skolem function symbol associated to $a$.  We call
$\QBFtoPL{M}{p}{f} \sigma$ the \emph{skolemized form} of
$\QBFtoPL{M}{p}{f}$ and denote it by $\SkQBFtoPL{M}{p}{f}$, where the
substitution $\sigma$ is as follows.
$$
\sigma = \{ a \backslash f_a(\dep{a}) \mid \text{for all existential variables
  $a$ in $\Phi$}\}
$$
\end{definition}
Traditionally, $\SkQBFtoPL{M}{p}{f}$ is denoted as a quantifier-free
formula with the assumption that all free variables are (implicitly)
universally quantified. 

The number of universal variables a Skolem function depends on can be
optimized, e.g., by using miniscoping or dependency schemes
\cite{DBLP:conf/cade/SeidlLB12}. As we will see later on, most of our
results do not depend on such optimizations.

\begin{proposition}\label{prop:matrix-isomorphic}
Let $\Phi$ be a closed QBF in PCNF with matrix $M$ and let
$\QBFtoPL{\Phi}{p}{f}$ be its closed first-order translation.  Let
$\SkQBFtoPL{M}{p}{f}$ be the \emph{skolemized form} of
$\QBFtoPL{M}{p}{f}$. Then $M \cong \SkQBFtoPL{M}{p}{f}$. 
\end{proposition}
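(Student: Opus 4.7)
The plan is to derive the isomorphism by chaining two simpler ones, namely $M \cong \QBFtoPL{M}{p}{f}$ and $\QBFtoPL{M}{p}{f} \cong \SkQBFtoPL{M}{p}{f}$, and then applying transitivity of $\cong$.

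For the first link, Proposition~\ref{prop:isomorphic} as stated applies to closed QBFs; however, its proof proceeds by structural induction on $\lc{\Phi}$, treating atoms (variables and truth constants), negation, binary connectives, and quantifiers independently. Since the matrix $M$ of a PCNF formula is quantifier-free, the relevant cases of that induction all apply verbatim to $M$, yielding $M \cong \QBFtoPL{M}{p}{f}$. The variables of $M$ --- bound at the QBF level but free within the matrix --- are handled by the atomic case of the translation, which maps each propositional variable $v$ to the atomic first-order subformula $p(v)$.

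For the second link, I would prove the auxiliary observation that for any quantifier-free first-order formula $F$ and any substitution $\tau$, $F$ and $F\tau$ are isomorphic as formation trees. This is an easy induction on $\lc{F}$: a substitution acts only inside predicate applications, sending the atomic $p(t_1,\ldots,t_n)$ to the atomic $p(t_1\tau,\ldots,t_n\tau)$, and commutes with every Boolean connective by definition, i.e., $(F_1 \circ F_2)\tau = F_1\tau \circ F_2\tau$ and $(\neg F_1)\tau = \neg F_1\tau$. Specializing to $F = \QBFtoPL{M}{p}{f}$ and $\tau$ the Skolem substitution $\sigma = \{a \backslash f_a(\dep{a}) \mid a \text{ existential in } \Phi\}$ yields $\QBFtoPL{M}{p}{f} \cong \QBFtoPL{M}{p}{f}\sigma = \SkQBFtoPL{M}{p}{f}$, and transitivity then gives $M \cong \SkQBFtoPL{M}{p}{f}$.

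I expect no essential obstacle; the only subtle point is fixing exactly what "isomorphic" means so that the propositional and first-order levels can be identified, namely isomorphism of formation trees that matches each propositional atom (or truth constant) in $M$ with the corresponding atomic first-order subformula $p(\cdot)$ in $\SkQBFtoPL{M}{p}{f}$. Once this identification is made explicit (in accordance with the base cases of the translation in Fig.~\ref{fig:translation1}), the two structural inductions above combine routinely.
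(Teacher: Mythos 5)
Your proposal is correct and matches the argument the paper intends: the paper states Proposition~\ref{prop:matrix-isomorphic} without an explicit proof, treating it as an immediate consequence of the structural induction behind Proposition~\ref{prop:isomorphic} (restricted to the quantifier-free cases) together with the observation that applying the Skolem substitution only rewrites terms inside atoms and so preserves the formation tree. Your decomposition into $M \cong \QBFtoPL{M}{p}{f}$ and $\QBFtoPL{M}{p}{f} \cong \QBFtoPL{M}{p}{f}\sigma$ followed by transitivity is exactly the right way to make that explicit.
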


Due to propositions~\ref{prop:isomorphic} and
\ref{prop:matrix-isomorphic},
we can relate each literal of each clause from $M$ to \emph{its}
isomorphic counterpart in $\QBFtoPL{M}{p}{f} \sigma$.

Since we interpret over a two-element domain, proper Skolem function
symbols (i.e., the arity is greater than $0$) can be eliminated by
introducing new predicate symbols. The resulting formula belongs to
EPR (Effectively PRopositional logic or more traditionally it belongs
to the Bernays-Schoenfinkel class).

\begin{definition}
Let $\Phi$ be a closed QBF in PCNF with matrix $M$ and let
$\QBFtoPL{\Phi}{p}{f}$ be its closed first-order translation.  Let
$\SkQBFtoPL{M}{p}{f}$ the skolemized form of $\QBFtoPL{M}{p}{f}$.
Replace any occurrence of a predicate of the form $p(f_b(X))$ by
$p_b(X))$ where $f_b$ is a proper function symbol 
and $X$ is a non-empty list of universal
variables. The formula resulting after all possible replacements is
the EPR formula $\EPRQBFtoPL{M}{p}{f}$.
\end{definition}
We will see later that the first-order 
and the EPR translation have different proof-theoretical properties
because some
resolutions are blocked by different predicate symbols.
Proposition~\ref{prop:sat-equiv} is Lemma~1 in
\cite{DBLP:conf/cade/SeidlLB12} (stated without a proof).

\begin{restatable}{prop}{SatEquiv}
\label{prop:sat-equiv}
  Let $\Phi$ be a closed QBF.  Then
\begin{eqnarray*}
  \text{$\Phi$ is satisfiable} & \text{\quad iff\/ \quad} & 
  \text{$\QBFtoPL{\Phi}{p}{f}\land
    p(f_{1}) \land \neg p(f_{0})$ is satisfiable.} 
\end{eqnarray*}
\end{restatable}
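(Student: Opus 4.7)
The plan is a proof by induction on the structure (i.e.\ on $\lc{\Phi}$), handling the two directions separately. The added conjuncts $p(f_{1}) \land \neg p(f_{0})$ play the role of forcing the predicate $p$ to take both truth values on canonical ``Boolean'' representatives, which lets the (possibly infinite-domain) first-order models be aligned with the two-valued QBF semantics.

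For the forward direction, suppose $\Phi$ is satisfiable; since $\Phi$ is closed this just means it evaluates to true. I construct the canonical two-element structure $\caA$ with domain $\{0,1\}$, $f_{0}^{\caA} = 0$, $f_{1}^{\caA} = 1$, and $p^{\caA} = \{1\}$, which satisfies $p(f_{1}) \land \neg p(f_{0})$ by design. Extending any Boolean assignment $\beta$ of the QBF variables to a first-order assignment via $\verum \mapsto 1$, $\falsum \mapsto 0$, I prove by induction on $\Phi$ that $\beta$ satisfies $\Phi$ iff $\caA$ satisfies $\QBFtoPL{\Phi}{p}{f}$ under the extended assignment. The base cases follow directly from the defining clauses of Fig.~\ref{fig:translation1}; the connective cases are immediate from the induction hypothesis; and in the quantifier cases, the first-order $\forall x$ (resp.\ $\exists x$) in $\caA$ ranges over $\{0,1\}$, which is in bijection with $\{\verum,\falsum\}$, mirroring $\forall^{b} x$ (resp.\ $\exists^{b} x$).

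For the backward direction, suppose $\caA \models \QBFtoPL{\Phi}{p}{f} \land p(f_{1}) \land \neg p(f_{0})$ for some (arbitrary-domain) structure $\caA$. The key lemma is that because $\QBFtoPL{\cdot}{p}{f}$ introduces each QBF variable $x$ only via the atom $p(x)$, the truth value of $\QBFtoPL{\Phi'}{p}{f}$ in $\caA$ under a variable assignment $\beta$ depends only on the Boolean values $p^{\caA}(\beta(x))$ for the free variables of $\Phi'$. Since $p(f_{1})$ and $\neg p(f_{0})$ give witnesses of both $p$-classes, every first-order quantifier in the translation may be restricted to range only over $\{f_{0}^{\caA}, f_{1}^{\caA}\}$ without changing its truth value. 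Defining a Boolean assignment $\beta^{*}$ by $\beta^{*}(x) = \verum$ iff $p^{\caA}(\beta(x))$ holds, I conclude by induction on $\Phi$ that $\Phi$ evaluates to true under $\beta^{*}$, hence $\Phi$ is satisfiable.

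The main obstacle is the ``$p$-value dependence'' lemma in the backward direction, which is what allows the potentially rich structure of $D^{\caA}$ to be collapsed to the two-element case used in the forward direction. An equivalent formulation is to pass explicitly to the substructure with domain $\{f_{0}^{\caA}, f_{1}^{\caA}\}$ and argue it still satisfies $\QBFtoPL{\Phi}{p}{f}$, reducing the ``if'' direction to an application of the correspondence already established for ``only if''. Everything beyond this dependence step is a routine translation-chasing induction following Fig.~\ref{fig:translation1}.
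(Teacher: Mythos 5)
Your proof is correct, and while your forward direction is essentially the paper's (both build the canonical two-element model in which $p$ holds exactly of the interpretation of $f_1$, and align the QBF evaluation with the first-order one leaf-by-leaf), your backward direction takes a genuinely different route. The paper argues the contrapositive proof-theoretically: if $\Phi$ is unsatisfiable, completeness of \qres{} yields a refutation of an equivalent PCNF, the simulation result of Corollary~\ref{cor:Rfo-psim-qres} lifts it to an \Rfo{} refutation of the Skolemized translation conjoined with $p(f_1)\land\neg p(f_0)$, and soundness of \Rfo{} then gives unsatisfiability of $\QBFtoPL{\Phi}{p}{f}\land p(f_1)\land\neg p(f_0)$. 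You instead argue model-theoretically: since every quantified variable occurs in the translation only inside $p(\cdot)$, truth in an arbitrary model $\caA$ depends only on the $p$-class of the elements assigned to variables, and the conjuncts $p(f_1)\land\neg p(f_0)$ guarantee both classes are inhabited, so all quantifiers relativize to $\{f_0^{\caA},f_1^{\caA}\}$ and the model induces a satisfying Boolean evaluation of $\Phi$. Your version is self-contained and purely semantic; it needs neither completeness of \qres{} nor the simulation theorem (which in the paper creates a forward reference to a result proved later), whereas the paper's version comes almost for free once the simulation machinery is available and deliberately reuses it. The one point worth spelling out in your ``$p$-value dependence'' lemma is that the quantifier step of the induction uses precisely the fact that both $p$-values are realized in the domain, which is exactly what the two extra conjuncts supply.
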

\noindent
A proof can be found in the appendix.

\begin{figure}[tb]
\begin{mdframed}
\begin{center}
\begin{tabular}{ccccccc}
\prfbyaxiom{\Axiom}{C} &
\hspace{2em} &
\prftree[r]{\Res}
{x \lor C_1}{\neg x\lor C_2}{C_1\lor C_2} &
\hspace{2em} &
\prftree[r]{\Fac}
{C\lor \ell \lor \ell}{C\lor \ell} &
\hspace{2em} &
\prftree[r]{\ForallRed}
{D\lor m}{D}
\end{tabular}
\end{center}

\medskip

$C$ is a non-tautological clause from the matrix. If $y\in C_1$
then $\neg y\notin C_2$. Variable $x$ is existential (\qres) and
existential or universal (\qures), $\ell$ is a literal and $m$ is
a universal literal. If $e\in D$ is existential, then $\level{e} <
\level{m}$ holds.
\end{mdframed}
\caption{The rules of \qres{} and \qures{} 
\cite{DBLP:journals/iandc/BuningKF95,DBLP:conf/cp/Gelder12}
\label{fig:QUres}}
\end{figure}

\begin{figure}[tbh]
\begin{mdframed}
\begin{displaymath}
\prfbyaxiom{\Axiom}{\{e^{\restrict{\sigma}} \mid e \in C,
\text{$e$ is existential}\}}
\end{displaymath}

\medskip

$C$ is a non-tautological clause from the matrix $M$, $\sigma = \{u\is
0 \mid \text{$u\in C$ universal}\}$ where $u\is 0$ is a shorthand for
$x\is 0$ if $u=x$ and $x\is 1$ if $u=\neg x$.

\begin{center}
\begin{tabular}{ccccc}
\prftree[r]{\Res}
{x^{\tau} \lor C_1}{\neg x^{\tau}\lor C_2}{C_1\lor C_2} &
\hspace{4em} &
\prftree[r]{\Fac}
{C\lor \ell^{\tau} \lor \ell^{\tau}}{C\lor \ell^{\tau}} &
\hspace{4em} &
\prftree[r]{Inst}{C}{\text{\InstOp{\tau}{C}}}
\end{tabular}
\end{center}
$\tau$ is an assignment to universal variables and
$\range{\tau}\subseteq \{0,1\}$.
\end{mdframed}
\caption{The rules of  \IRcalc(P,M){} taken from
\cite{DBLP:conf/mfcs/BeyersdorffCJ14}\label{fig:IR-CalcPM}}
\end{figure}

\section{Different calculi based on resolution}\label{sec:res-calculi}
We introduce different calculi used in this paper.  We start with two
resolution calculi, \qres{} and \qures, for QBFs in
Fig.~\ref{fig:QUres}.  Observe that the consequence of each 
rule is non-tautological.  We continue with the calculus
\IRcalcPM{P}{M}{} in Fig.~\ref{fig:IR-CalcPM}, where we use the same
presentation as in \cite{DBLP:conf/mfcs/BeyersdorffCJ14}.  $P$ is the
quantifier prefix and $M$ is the quantifier-free matrix in CNF. In 
the following instantiation-based calculi, inference rules do not work
on usual clauses but on \emph{annotated clauses} based on
\emph{extended assignments}. An extended assignment is a
partial mapping from the Boolean variables to $\{0,1\}$. An annotated
clause consists of \emph{annotated literals} of the form
$\ell^{\restrict{\tau}}$, where $\tau$ is an extended assignment to
\emph{universal} variables and $\restrict{\tau} = \{u\is c \mid (u\is
c)\in \tau, \level{u}< \level{\ell}\}$ with $c\in \{0,
1\}$. Composition of extended assignments is defined using
\emph{completion}. The expression $\mu \complassign \tau$ is called
the completion of $\mu$ by $\tau$.  Then $\sigma$, the completion of
$\mu$ by $\tau$, is defined as follows.
\begin{equation}\label{eqn:completion}
\sigma(x) = 
\begin{cases}
\mu(x)  & \text{if $x\in \dom{\mu}$};\\
\tau(x) & \text{if $x\notin \dom{\mu}$ and $x\in \dom{\tau}$}.
\end{cases}
\end{equation}
The function \InstOp{\tau}{C} allows instantiations of clauses; it
computes $\{\ell^{\restrict{\mu \complassign \tau}} \mid \ell^{\mu} \in C \}$ 
for an extended assignment $\tau$ and an annotated clause $C$.
Later on, we will clarify the
relation between annotations and substitutions in first-order logic.

We extend \IRcalcPM{\cdot}{\cdot}{} by the possibility to instantiate
universal variables by existential ones.  Technically the
instantiation is performed by a global substitution $\sigma_v$.  If a
universal variable $x$ is replaced by some existential variable $e$,
i.e., $(x\is e) \in \sigma_v$, then $\level{e}<\level{x}$ must
hold. We name the calculus equipped with the substitution $\sigma_v$
\IRcalcPMSubst{P}{M}{\sigma_v} and depict the rules in
Fig.~\ref{fig:IR-CalcPM-enhanced}.

\begin{figure}[tb]
\begin{mdframed}
\begin{displaymath}
\prfbyaxiom{\Axiom}{\{e^{\restrict{\sigma}} \mid e \in C\sigma_v,
\text{$e$ is existential}\}}
\end{displaymath}

\begin{enumerate}
\item $C$ is a non-tautological clause from the matrix $M$.
\item $\sigma_v = \{x \is e \mid \text{$x$ is universal, 
 $e$ is existential, $\level{e}<\level{x}$}\}$.
\item $C\sigma_v = \{e \mid \text{$e\in C$ existential}\} \cup 
\{x\sigma_v \mid \text{$x\in C$ universal}, x\in \dom{\sigma_v}\} 
\cup \mbox{}$ \\
\hspace*{4em}$\{x \mid \text{$x\in C$ universal}, x\notin \dom{\sigma_v}\}$.
\item $\sigma$, \Res, \Fac{} and \Inst{} are the same as 
in \IRcalcPM{\cdot}{\cdot}.
\end{enumerate}
\end{mdframed}
\caption{The rules of  \IRcalcPMSubst{P}{M}{\sigma_v}
\label{fig:IR-CalcPM-enhanced}}
\end{figure}
It is immediately apparent that 
this calculus is sound and complete. We get completeness, when we use
the empty substitution as $\sigma_v$ because then,
\IRcalcPMSubst{\cdot}{\cdot}{\cdot} reduces to \IRcalcPM{\cdot}{\cdot}
which is sound and complete \cite{DBLP:conf/mfcs/BeyersdorffCJ14}.
Soundness follows from the validity of QBFs of the form
\begin{displaymath}
\caQ_1\exists e \caQ_2 \forall x \caQ_3 \, \varphi(e,x) \, \impl \, 
\caQ_1\exists e \caQ_2 \caQ_3 \, \varphi(e,e).
\end{displaymath}
If the right formula has an \IRcalcPM{\cdot}{\cdot} refutation, then
it is false and therefore the left formula has to be false.

\begin{figure}[t]
\begin{mdframed}
\begin{displaymath}
\prfbyaxiom{\Axiom}{\{e^{\restrict{\sigma}} \mid e \in C\sigma_v,
\text{$e$ is existential}\}}
\end{displaymath}

\begin{enumerate}
\item $C$ is a non-tautological clause from the matrix $M$ or from $\Delta$.

\item $\sigma_v$, $C\sigma_v$, $\sigma$, \Res, \Fac{} and \Inst{} are
  the same as in \IRcalcPMSubst{\cdot}{\cdot}{\cdot}.

\item If $C\in \Delta$ then $\sigma = \epsilon$ and $C=C\sigma_v$ by
  construction.

\end{enumerate}
\end{mdframed}
\caption{The rules of  \IRcalcPMDSubst{P}{M}{\Delta}{\sigma_v}
\label{fig:IR-CalcPMD-enhanced}}
\end{figure}

We further enhance \IRcalcPMSubst{\cdot}{\cdot}{\cdot}{} by the
possibility to use propositional extensions
\cite{Tseitin:1968,BCJ-AAAI-WS16}.  This extension operation is a
generalization of the well-known structure-preserving translation to
(conjunctive) normal form in propositional logic.  For presentational
reasons, we require to have all extensions at the very beginning of
the deduction in order to allow extension variables as replacements
for universal variables. Figure~\ref{fig:IR-CalcPMD-enhanced} shows
the inference rules of this calculus
\IRcalcPMDSubst{P}{M}{\Delta}{\sigma_v}, where $\Delta$ is a sequence
$\delta_1,\ldots , \delta_d$ of (clausal representations of)
extensions of the form $\delta_i\colon q_i \lequiv F$ with $F$ being
of the form $\neg p$ or of the form $p\circ r$ ($\circ\in \{\land,
\lor, \impl, \lequiv, \xor\}$) and $q_i$ is a variable neither
occurring in $M$ nor in $F$ nor in $\delta_1, \ldots ,
\delta_{i-1}$. The variables $q_i, p, r$ are existential.  The
quantification $\exists q_i$ extends the quantifier prefix $P$ such
that $\level{v} \leq \level{q_i}$ for all variables $v$ occurring in
$F$ and $\level{q_i}$ is minimal. Due to the requirements on the
extension variables $q_i$ and the placement of $\exists q_i$, the
resulting calculus is sound. Completeness is not an issue here,
because we can use an empty $\Delta$.

\begin{remark}\label{rem:clauses-as-sets}
The usual formalization of clauses and resolvents as sets of literals
can be simulated in our formalizations by the factoring rule
\Fac. We assume in the following that \Fac{} is applied
as soon as possible.
\end{remark}

We finally introduce first-order resolution.  Let $C$ be a clause and
let $K$ and $L$ be two distinct literals in $C$ both of which are
either negated or unnegated.  If there is an mgu $\sigma$ of $K$ and
$L$, then the clause $D=C\sigma=\{N\sigma \mid N\in C\}$ is called a
\emph{factor} of $C$. The clause $C$ is called the \emph{premise} of the
factoring operation.

Let $C$ and $D$ be two clauses and let $D'$ be a variant of $D$ which
has no variable in common with $C$. A clause $E$ is a \emph{resolvent}
of the parent clauses $C$ and $D$ if the following conditions hold:
\begin{enumerate}
\item $K\in C$ and $L'\in D'$ are literals of opposite sign whose
  atoms are unifiable by an mgu $\sigma$.

\item $E=\big(C\sigma \setminus \{K\sigma\}\big) \cup 
\big(D'\sigma \setminus \{L'\sigma\}\big)$.
\end{enumerate}
Let $\caC$ be a set of clauses. A sequence $C_1, \ldots , C_n$ is
called \emph{\Rfo{} deduction} (first-order resolution deduction) of a
clause $C$ from $\caC$ if $C_n = C$ and for all $i = 1, \ldots , n$,
one of the following conditions hold.
\begin{enumerate}
  \item $C_i$ is an input clause from $\caC$.
  \item $C_i$ is a factor of a $C_j$ for $j < i$.
  \item $C_i$ is a resolvent of $C_j$ and $C_k$ for $j,k < i$.
\end{enumerate}
An \emph{\Rfo{} refutation of $\caC$} is an \Rfo{} deduction of the
empty clause $\emptyclause$ from $\caC$.  The \emph{size} of a
deduction is given by $\sum_{i=1}^n \size{C_i}$, where $\size{C_i}$ is
the number of character occurrences in $C_i$. An \Rfo{} deduction has
\emph{tree form} if every occurrence of a clause is used at most once
as a premise in a factoring operation or as a parent clause in a
resolution operation.

Next we introduce the \emph{subsumption rule} taken from
Definition~2.3.4 in \cite{Eder:1992}. Contrary to the usual use of
subsumption in automated deduction as a deletion rule, here we
\emph{add} clauses which are (factors of) instantiations of clauses.

\begin{definition}
If $C$ and $D$ are clauses, then \emph{$C$ subsumes $D$} or \emph{$D$
  is subsumed by $C$}, if there is a substitution $\sigma$ such that
$C\sigma \subseteq D$. A set $S'$ of clauses is obtained from a set
$S$ by subsumption if $S'=S\cup \{D\}$ where $D$ is subsumed by a
clause of $S$.
\end{definition}

Resolution can be extended by the subsumption rule
(Definition~3.2.3 in \cite{Eder:1992}). 
\begin{definition}
By a derivation of a set of clauses $S_2$ from a set of clauses $S_1$
by \emph{\Rfo{} plus subsumption}, we mean a sequence $C_1, \ldots ,
C_n$ of clause such that the following conditions are fulfilled.
\begin{enumerate}
\item $S_2 \subseteq S_1 \cup \{C_1, \ldots ,C_n\}$.
\item \label{item:RfoPlusSubs}
For all $k=1,\ldots , n$ there is a clause $C\in S_1 \cup \{C_1,
  \ldots ,C_{k-1}\}$ subsuming the clause $C_k$ or there exist clauses
  $C, D \in S_1 \cup \{C_1, \ldots ,C_{k-1}\}$ such that $C_k$ is
  subsumed by a resolvent of $C$ and $D$.
\end{enumerate}
\end{definition}
Factors are not needed in item~\ref{item:RfoPlusSubs}, because the
factor of $C$ can be generated by subsumption.  We need a simplified
version of Proposition~3.2.1 from \cite{Eder:1992}.

\begin{proposition}\label{prop:R1subs-psim-R1}
\Rfo{} polynomially simulates  \Rfo{} plus subsumption.
\end{proposition}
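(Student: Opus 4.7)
The plan is to show the simulation by induction on the length of an \Rfo{} plus subsumption derivation $C_1, \ldots, C_n$. I would build an ordinary \Rfo{} derivation $D_1, D_2, \ldots$ together with a map $f$ such that, for each $k$, the clause $D_{f(k)}$ subsumes $C_k$, i.e., there is some substitution $\sigma_k$ with $D_{f(k)}\sigma_k \subseteq C_k$. Since the empty clause is subsumed only by itself, any \Rfo{} plus subsumption refutation then yields an \Rfo{} refutation.

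For the inductive step I would distinguish the two cases from item~\ref{item:RfoPlusSubs} of the definition. If $C_k$ is subsumed directly by some $C \in S_1 \cup \{C_1, \ldots, C_{k-1}\}$, then either $C$ is an input clause (already present in the new derivation) or $C = C_j$ for some $j<k$; in either case, the inductive hypothesis supplies a subsumer whose witnessing substitution composed with the one witnessing that $C$ subsumes $C_k$ also subsumes $C_k$, so no new clause is needed and $f(k)$ reuses the earlier index. If instead $C_k$ is subsumed by a resolvent of two clauses $C, D \in S_1 \cup \{C_1, \ldots, C_{k-1}\}$, I would invoke the standard lifting lemma for resolution: whenever $C'$ subsumes $C$ and $D'$ subsumes $D$, every resolvent of $C$ and $D$ is subsumed by a clause derivable from $C'$ and $D'$ using at most one factoring step on each parent followed by a single resolution step. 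Applying this with $C'$ and $D'$ the subsumers supplied by the inductive hypothesis (or input clauses if already in $S_1$), I append the resulting factors and resolvent to the new derivation and let $f(k)$ be the index of that resolvent.

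The main obstacle is the size bookkeeping. Each step of the original derivation contributes at most three steps (two factorings and one resolution) in the simulation, so the derivation length grows linearly in $n$. Each new clause $D_{f(k)}$ satisfies $D_{f(k)}\sigma_k \subseteq C_k$, which bounds its number of literals by that of $C_k$. The character size of each literal is also controlled: every variable arising in the simulation originates either from an input clause in $\caC$ or from a renaming introduced to keep parent clauses variable-disjoint, and the mgus produced by the lifting lemma can be chosen so that the terms in the lifted factors and resolvent are pre-images (under $\sigma_k$) of atoms already present in $C_k$. Putting these pieces together, $\size{D_{f(k)}}$ is polynomial in $\size{C_k}$, and the total size of the simulating \Rfo{} derivation is polynomial in $\sum_{i=1}^n \size{C_i}$, as required.
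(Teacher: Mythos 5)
The paper does not actually prove this proposition; it imports it as a simplified version of Proposition~3.2.1 of Eder's book, so your proposal is a reconstruction of the standard argument rather than a deviation from anything in the text. Your overall strategy --- maintain, by induction on the derivation $C_1,\ldots ,C_n$, a genuine \Rfo{} derivation together with a clause $D_{f(k)}$ and substitution $\sigma_k$ with $D_{f(k)}\sigma_k\subseteq C_k$, and observe that only the empty clause subsumes the empty clause --- is exactly the right one and matches the cited source in spirit.

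Two points need tightening. First, the lifting lemma you invoke is stated too strongly: if $C'$ subsumes $C$ and $D'$ subsumes $D$, a resolvent of $C$ and $D$ need not be subsumed by a resolvent of (factors of) $C'$ and $D'$; it may instead be subsumed by $C'$ or $D'$ \emph{alone}. This happens when the literal resolved upon in $C$ has no pre-image in $C'$ under the subsumption substitution (e.g., $C'=\{q\}$, $C=\{p,q\}$, $D=\{\neg p\}$: the resolvent $\{q\}$ is not a resolvent of $C'$ and $D$). The correct lemma is a disjunction, and the extra disjunct is harmless for you --- it reduces to your first case, where $f(k)$ reuses an earlier index --- but it must be stated, or the resolution case of your induction does not go through as written. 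Second, the size bookkeeping is asserted rather than argued: $D_{f(k)}\sigma_k\subseteq C_k$ does \emph{not} by itself bound the number of literals of $D_{f(k)}$ by that of $C_k$, since distinct literals of $D_{f(k)}$ may collapse under $\sigma_k$; one either tracks this multiplicity through the lifting (as Eder does) or interleaves condensation/factoring steps to keep the lifted clauses small. With those two repairs the proof is the standard one and is correct.
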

The subsumption rule is not necessary but makes proofs of polynomial
simulation results much more convenient.  It allows instantiated
deductions for which eventually the lifting theorem provides a
deduction ``on the most general level''.

\section{Polynomial simulations of calculi} 
\label{sec:poly-sim} 

In this section we show that \Rfo{} together with a suitable
translation $\caT$ (denoted by \Rfo{} $+$ $\caT$) polynomially
simulates \qures, \qres{} and \IRcalcPM{\cdot}{\cdot}.

\begin{restatable}{thm}{RfoPsimQU}
\label{thm:R1-psim-QUres}
\Rfo{} $+$ $\SkQBFtoPL{\cdot}{p}{f}$ polynomially
simulates \qures.
\end{restatable}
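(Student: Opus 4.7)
The plan is to proceed by induction on the length $n$ of a given \qures{} refutation $\pi$ of the input QBF $\Phi$ with matrix $M$, simulating each rule within \Rfo{} plus subsumption and then invoking Proposition~\ref{prop:R1subs-psim-R1}. First I would fix an embedding $\embed{\cdot}$ of \qures{} clauses into first-order clauses: an existential literal with sign $s$ on variable $e$ becomes $s\,p(f_e(\dep{e}))$, a universal literal with sign $s$ on variable $u$ becomes $s\,p(u)$, and a fresh renaming of the universal variables is used for each occurrence so that distinct input instances of the same matrix clause are variable-disjoint. The invariant is that, after simulating the derivation of a \qures{} clause $C$ in $\pi$, one has derived from $\SkQBFtoPL{M}{p}{f}\cup\{p(f_1),\neg p(f_0)\}$ a clause that subsumes (a suitable instance of) $\embed{C}$, within a number of \Rfo{}+subsumption steps bounded linearly in the position of $C$ in $\pi$.

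The axiom rule is trivial, since $\embed{C}$ for an input clause $C$ is (a renaming of) a clause of $\SkQBFtoPL{M}{p}{f}$. For resolution on an existential variable $x$, the clashing first-order literals are $p(f_x(\vec{u}))$ and $\neg p(f_x(\vec{u}'))$ in two variable-disjoint renamings of $\embed{C_1}$ and $\embed{C_2}$, and the mgu simply identifies corresponding argument variables; for resolution on a universal variable $y$, the clashing literals are $p(y_1)$ and $\neg p(y_2)$ with mgu $\{y_1\backslash y_2\}$. In both cases the \qures{} side condition preventing tautological resolvents guarantees that the first-order resolvent is also non-tautological. Factoring in \qures{} is mirrored by first-order factoring. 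Whenever the required instance of the successor clause is more specific than the one naturally produced by the unifier, subsumption supplies it, which is exactly what makes Proposition~\ref{prop:R1subs-psim-R1} the right tool here.

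The crucial case is \ForallRed, which derives $D$ from $D\lor m$ for a universal literal $m$ satisfying $\level{e}<\level{m}$ for every existential $e\in D$. Let $y$ be the universal variable underlying $m$ and let $c\in\{f_0,f_1\}$ be chosen so that applying $\{y\backslash c\}$ to $\embed{m}$ yields $\neg p(f_0)$ (if $m=y$) or $p(f_1)$ (if $m=\neg y$); the clashing side axiom is available by Proposition~\ref{prop:sat-equiv}. The level restriction is exactly what is needed: no Skolem term $f_e(\dep{e})$ appearing in $\embed{D}$ has $y$ among its arguments, so the substitution $\{y\backslash c\}$ leaves $\embed{D}$ untouched while turning the universal literal into one on a ground constant. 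A single subsumption step followed by one resolution with the appropriate side axiom yields $\embed{D}$.

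Each \qures{} step is thus simulated by a constant number of \Rfo{}+subsumption operations on clauses of size polynomial in the size of $\pi$, giving a polynomial-size \Rfo{}+subsumption refutation of $\SkQBFtoPL{M}{p}{f}\cup\{p(f_1),\neg p(f_0)\}$. Proposition~\ref{prop:R1subs-psim-R1} then converts this into a pure \Rfo{} refutation of polynomial size. The main obstacle I anticipate is bookkeeping the renamings and accumulated unifier effects across successive steps so that the clause one actually holds after the $k$th simulated step still subsumes (an instance of) $\embed{C_k}$; handling this cleanly is precisely why proceeding through \Rfo{}+subsumption, rather than directly through \Rfo, is essential.
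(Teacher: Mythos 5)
Your proof is correct, but it takes a genuinely different route from the paper's. The paper proves this theorem by a direct, subsumption-free induction: alongside the ``intended'' clauses $D_i \cong C_i$ it constructs an actual \Rfo{} deduction $E_1,\ldots ,E_n$ and maintains the lifting invariant that $D_i = E_i\sigma$ for a \emph{variable} substitution $\sigma$ (plus non-tautology), doing the mgu/renaming bookkeeping explicitly in the resolution and factoring cases; the only ground instantiation ever performed is the vacuous one $\{y\backslash f_0\}$ or $\{y\backslash f_1\}$ at a \ForallRed{} step, which is exactly the observation the paper highlights after the theorem statement (universal literals are eliminated by resolving against $p(f_1)$, $\neg p(f_0)$ \emph{without} instantiating the rest of the resolvent). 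You instead work in \Rfo{} plus subsumption, renormalizing to the canonical embedding $\embed{C_k}$ after every step and discharging the accumulated-renaming problem by one appeal to Proposition~\ref{prop:R1subs-psim-R1}; this is precisely the strategy the paper reserves for the harder simulation of \IRcalcPM{\cdot}{\cdot} in Theorem~\ref{Rfo-psim-IRcalc}. Your treatment of \ForallRed{} via the level condition (no Skolem term $f_e(\dep{e})$ in $\embed{D}$ contains the reduced variable, and tautology-freeness excludes other occurrences) matches the paper's argument. What each approach buys: yours is uniform with the \IRcalc{} case and avoids the delicate propagation of variable substitutions through mgus, at the cost of invoking Eder's simulation result as a black box; the paper's is self-contained for this theorem and yields the sharper structural statement about where instantiation happens. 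One cosmetic slip: in your \ForallRed{} case, applying $\{y\backslash c\}$ to $\embed{m}=p(y)$ yields $p(f_0)$, which is then resolved \emph{against} the axiom $\neg p(f_0)$ --- the sign in your sentence is off, but the intended step is the right one.
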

\noindent 
The proof is by induction on the number of clauses in the \qures{}
deduction. It can be found in the appendix. It shows that first-order
literals obtained from universal literals in the QBF and eliminated by
\ForallRed{} are eliminated by resolutions with $p(f_1)$ and $\neg
p(f_0)$ \emph{without} instantiating the first-order resolvent.

\begin{corollary}\label{cor:Rfo-psim-qres}
The following results are immediate consequences of 
Theorem~\ref{thm:R1-psim-QUres}.
\begin{enumerate}
\item \Rfo{} $+$ $\EPRQBFtoPL{\cdot}{p}{f}$ 
polynomially simulates \qures.

\item \Rfo{} $+$ $\SkQBFtoPL{\cdot}{p}{f}$ as well as \Rfo{} $+$
  $\EPRQBFtoPL{\cdot}{p}{f}$ polynomially simulates \qres.

\end{enumerate}
\end{corollary}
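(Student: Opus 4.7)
The plan is to reduce both items to Theorem~\ref{thm:R1-psim-QUres} by exploiting two simple observations: (a) the EPR translation differs from the Skolem translation only by a renaming that preserves the unification structure, and (b) \qres{} is literally a syntactic restriction of \qures. Neither step should require any genuinely new ideas; the task is to verify that the simulation constructed in Theorem~\ref{thm:R1-psim-QUres} transports along these two reductions.

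For item~1, I would first unfold the definition of $\EPRQBFtoPL{\cdot}{p}{f}$ and isolate what actually changes relative to $\SkQBFtoPL{\cdot}{p}{f}$: only occurrences $p(f_a(X))$ with $f_a$ a \emph{proper} Skolem symbol get replaced by $p_a(X)$. In particular, the literals $p(x)$ for universal variables $x$, the constants $p(f_0)$, $p(f_1)$, and the auxiliary clauses $p(f_1)$ and $\neg p(f_0)$ added by Proposition~\ref{prop:sat-equiv} are untouched, as are the images of existential variables whose Skolem symbol is $0$-ary. Thus the EPR translation induces a bijection between the literal occurrences in $\SkQBFtoPL{M}{p}{f}$ and those in $\EPRQBFtoPL{M}{p}{f}$.

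I would then walk through the \qures{} simulation of Theorem~\ref{thm:R1-psim-QUres} clause by clause and verify that every inference step goes through under this bijection. The two kinds of steps used are: (i) a \qures{} resolution on an (existential or universal) variable, simulated by a first-order resolution between the corresponding literals; and (ii) a \ForallRed{} on a universal literal $\pm x$, simulated by resolving $p(x)$ with $p(f_1)$ or $\neg p(f_0)$ \emph{without} instantiating the resolvent. Case (ii) is unaffected because $p(x), p(f_0), p(f_1)$ are literally the same in both translations. For case (i), a resolution on an existential $a$ with proper Skolem symbol $f_a$ corresponds in EPR to a resolution between $p_a(\cdot)$ and $\neg p_a(\cdot)$ with exactly the same mgu; for $a$ with $0$-ary Skolem symbol, nothing changes at all; for universal $x$ one resolves on $p(x)$ against $p(x)$, which is again unchanged. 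The main obstacle to check is the converse direction — namely, that no \emph{spurious} unification is enabled or \emph{required} unification blocked — but this is immediate: distinct Skolem symbols $f_a, f_b$ become distinct predicate symbols $p_a, p_b$, so the pattern of unifiability is preserved exactly. Since each step of the Skolem simulation is replayed by one step of the EPR simulation on a literal of the same size, the polynomial bound from Theorem~\ref{thm:R1-psim-QUres} carries over.

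For item~2, I would simply note that \qres{} is obtained from \qures{} by restricting the pivot of \Res{} to existential variables (see Fig.~\ref{fig:QUres}); every \qres{} refutation is therefore already a \qures{} refutation of the same size. Applying Theorem~\ref{thm:R1-psim-QUres} yields the polynomial simulation of \qres{} by \Rfo{} $+$ $\SkQBFtoPL{\cdot}{p}{f}$, and combining this with item~1 gives the same for \Rfo{} $+$ $\EPRQBFtoPL{\cdot}{p}{f}$.
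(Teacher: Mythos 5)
Your proposal is correct and matches what the paper intends: the corollary is stated there without proof as an ``immediate consequence'' of Theorem~\ref{thm:R1-psim-QUres}, and your two reductions --- (a) the EPR renaming $p(f_a(X))\mapsto p_a(X)$ preserves exactly the unifications the simulation actually performs (resolutions only ever pair literals with the same Skolem symbol, or $p(x)$ against the untouched $p(f_1)$/$\neg p(f_0)$), and (b) every \qres{} refutation is literally a \qures{} refutation --- are precisely the observations that make it immediate. Your explicit check that no required unification is blocked and no spurious one is enabled is a useful elaboration, and is consistent with the paper's later point that the EPR translation only blocks resolutions \emph{across} distinct Skolem symbols, which the simulation never needs.
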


We present a soundness proof of \IRcalcPM{\cdot}{\cdot}{} independent
from strategy extraction by a polynomial simulation of
\IRcalcPM{\cdot}{\cdot}{} by \Rfo.

\begin{definition}
\label{def:composition-subst}
Let $\tau = \{x_1\is s_1, \ldots , x_k \is s_k\}$ and $\mu =
\{y_1\is t_1, \ldots , y_l\is t_l\}$ be two substitutions. The
\emph{composition} of $\tau$ and $\mu$, $\tau\mu$, is obtained
from
\begin{displaymath}
\big\{x_1\is s_1\mu, \ldots , x_k \is s_k\mu, 
y_1\is t_1, \ldots , y_l\is t_l \big\} 
\end{displaymath}
by deleting all $y_i\is t_i$ for which $y_i \in \{x_1, \ldots , x_k\}$
holds.
\end{definition}

\begin{lemma}\label{lem:compl-is-composition}
Let $\tau$ and $\mu$ be two substitutions as defined in
Definition~\ref{def:composition-subst}, where $x_1,\ldots , x_k, y_1,
\ldots , y_l$ are universal variables and $\{s_1,\ldots , s_k, t_1,
\ldots , t_l\} \subseteq \{0, 1\}$. Then $\tau \complassign \mu$
is the composition $\tau\mu$.
\end{lemma}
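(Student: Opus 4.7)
The plan is to prove the equality by unfolding both sides and checking that they coincide on every variable. The key observation is that because the $s_i$ (and the $t_j$) are truth constants in $\{0,1\}$, applying any substitution to them leaves them unchanged. In particular, $s_i\mu = s_i$ for every $i$, so the formal expression for the composition $\tau\mu$ in Definition~\ref{def:composition-subst} simplifies to
\[
\tau\mu = \{x_1\is s_1, \ldots, x_k\is s_k\} \cup \{y_j\is t_j \mid y_j \notin \{x_1, \ldots, x_k\}\}.
\]

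Next I would verify pointwise that this set acts on every variable $x$ the same way as $\tau \complassign \mu$. I consider three cases according to the definition (\ref{eqn:completion}) of completion. If $x \in \dom{\tau}$, i.e.\ $x = x_i$ for some $i$, then $\tau\mu$ sends $x$ to $s_i = \tau(x)$, which coincides with $(\tau \complassign \mu)(x)$. If $x \notin \dom{\tau}$ but $x \in \dom{\mu}$, then $x = y_j$ for some $j$ with $y_j \notin \{x_1, \ldots, x_k\}$, so the deletion clause in Definition~\ref{def:composition-subst} does not fire and $\tau\mu$ sends $x$ to $t_j = \mu(x) = (\tau \complassign \mu)(x)$. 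Finally, if $x$ lies in neither domain, both substitutions act as the identity on $x$.

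Since $\tau\mu$ and $\tau \complassign \mu$ agree on every variable, they are the same substitution, which proves the lemma. The only potentially subtle point is the first step, namely noticing that the range being $\{0,1\}$ makes the $\mu$-rewriting inside the composition trivial; without this restriction one would need to take extra care with how $\mu$ is applied to elements of $\range{\tau}$, but in our setting this collapses and the equality is essentially bookkeeping.
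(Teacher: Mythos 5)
Your proof is correct and follows essentially the same route as the paper's: the key point in both is that $\range{\tau}\subseteq\{0,1\}$ is disjoint from $\dom{\mu}$, so $s_i\mu=s_i$ and the composition collapses to the completion. Your additional pointwise case analysis is just a more explicit spelling-out of the bookkeeping the paper leaves implicit.
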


\begin{proof}
Let $\sigma$ be the completion of $\tau$ by $\mu$ defined in
(\ref{eqn:completion}).
Since $\dom{\tau}$ as well as $\dom{\mu}$ is a subset of the set of
universal variables and $\range{\tau}$ as well as $\range{\mu}$ is a
subset of $\{0,1\}$, $\range{\tau} \cap \dom{\mu} = \{\}$ and
therefore $s_i\mu = s_i$ for all $i=1,\ldots , k$. Hence, the
completion $\sigma$ of the two substitutions $\tau$ and $\mu$ is
exactly their composition $\tau\mu$.\qed
\end{proof}

In the following, we deal with annotated clauses $C$ of the form
$\{l_1^{\restrict{\sigma_1}}, \ldots , l_k^{\restrict{\sigma_k}}\}$
where any $l_i$ is an existential literal and any
$\restrict{\sigma_i}$ is the restriction of assignment $\sigma_i$ to
exactly those universal variables $x\in \dom{\sigma_i}$ for which
$\level{x}<\level{l_i}$ holds. We denote the sequence of \emph{all}
universal variables $x$ with $\level{x}<\level{l_i}$ by $\dep{l_i} =
\sequence{X}_{l_i}$ where we assume the same order as in the
quantifier prefix. A first-order clause $D$ corresponding to $C$ is
constructed as follows
\begin{displaymath}
\big\{(\neg) p(f_e(\sequence{X}_e)) \sigma \mid (\neg)
e^{\restrict{\sigma}} \in C \quad \text{and} \quad 
p(f_e(\sequence{X}_e))\cong e \big\}, 
\end{displaymath}
where $p(f_e(\sequence{X}_e))$ is the isomorphic counterpart of $e$
(cf.\ the remark after Proposition~\ref{prop:matrix-isomorphic}).
Using $\sequence{X}_e$ together with $\sigma$ mimics the effect of
$\restrict{\sigma}$; the difference is the explicit notation of
\emph{all} universal variables $\sequence{X}_{e}$ left of $e$ and not
only the variables in $\sequence{X}_{e}\cap \dom{\sigma}$.

\begin{restatable}{thm}{RfoPsimIRcalc}
\label{Rfo-psim-IRcalc}
\Rfo{} $+$ $\SkQBFtoPL{\cdot}{p}{f}$
polynomially simulates \IRcalcPM{\cdot}{\cdot}.
\end{restatable}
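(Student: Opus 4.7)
The plan is to induct on the length of the \IRcalcPM{\cdot}{\cdot} refutation, simulating each step in \Rfo{} plus subsumption over the input clause set $\SkQBFtoPL{M}{p}{f}\cup\{p(f_{1}),\neg p(f_{0})\}$ and then invoking Proposition~\ref{prop:R1subs-psim-R1} to pass to pure \Rfo. The (un)satisfiability equivalence that makes this the right target is Proposition~\ref{prop:sat-equiv}. The invariant I maintain is the recipe stated just before the theorem: if the annotated clause is $C_i=\{\ell_1^{\restrict{\sigma_1}},\ldots,\ell_k^{\restrict{\sigma_k}}\}$ with each $\ell_j$ existential of the form $(\neg)e_j$, then the associated first-order clause is $D_i=\{(\neg)p(f_{e_j}(\sequence{X}_{e_j}))\sigma_j\mid 1\le j\le k\}$, where $p(f_{e_j}(\sequence{X}_{e_j}))$ is the isomorphic first-order counterpart of $e_j$ under the skolemization.

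I then proceed by case distinction on the rule used to derive $C_i$. For the \Axiom{} rule producing $C_i$ from matrix clause $C$ with $\sigma=\{u\is 0\mid u\in C,\ u\ \text{universal}\}$, I take the first-order translation of $C$, apply one subsumption step to instantiate it by $\sigma$, and iteratively resolve away the universal-derived literals: a positive universal literal $u=x$ becomes $p(f_{0})$ under $\sigma$ and resolves against the input clause $\neg p(f_{0})$, while a negative universal literal $u=\neg x$ becomes $\neg p(f_{1})$ and resolves against $p(f_{1})$. After $O(\card{C})$ resolutions only the existential-derived literals remain, yielding $D_i$ exactly. The \Res{} rule is immediate because both parent first-order clauses contain $\pm p(f_{x}(\sequence{X}_x))\tau$, which are syntactically identical atoms up to sign, so first-order resolution with the trivial mgu delivers $D_i$. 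The \Fac{} case is absorbed into the set-of-literals convention of Remark~\ref{rem:clauses-as-sets}. For the \Inst{} rule applied to $C_a$ with a universal assignment $\tau$, Lemma~\ref{lem:compl-is-composition} identifies the completion $\mu\complassign\tau$ with the substitution composition $\mu\tau$, so $D_i=D_a\tau$, obtained by a single subsumption step.

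Each simulated rule adds $O(\card{M})$ new first-order clauses, so the \Rfo{}-plus-subsumption refutation has size polynomial in the given \IRcalcPM{\cdot}{\cdot} refutation, and Proposition~\ref{prop:R1subs-psim-R1} then converts it into a pure \Rfo{} refutation within polynomial overhead. I expect the main technical hurdle to be the \Axiom{} case, where one must argue that the sequence of resolutions against $p(f_{1})$ and $\neg p(f_{0})$ eliminates exactly the universal-derived literals and leaves the existential-derived ones untouched. This should follow because the Skolem function symbols $f_e$ associated to existential variables $e$ are syntactically distinct from $f_{0}$ and $f_{1}$, so no existential-derived literal is accidentally unified by the resolutions, and because $\sigma$ is applied only to universal variables actually occurring as literals of $C$, leaving any universal variables appearing inside Skolem terms $f_{e}(\sequence{X}_e)$ but not as literals of $C$ as free first-order variables---exactly as the invariant requires.
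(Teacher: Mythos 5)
Your proposal follows essentially the same route as the paper's proof: the same clause correspondence $C_i \mapsto D_i$, the same use of \Rfo{} plus subsumption via Proposition~\ref{prop:R1subs-psim-R1}, the same treatment of \Axiom{} by resolving the universal-derived literals against $p(f_{1})$ and $\neg p(f_{0})$, and the same use of Lemma~\ref{lem:compl-is-composition} to reduce \Inst{} to a single subsumption step. The one inaccuracy is in the \Res{} case: the pivot atoms are not handled by a ``trivial mgu,'' because first-order resolution requires the parent clauses to be renamed apart, and both $D_i$ and $D_j$ contain the same free (uninstantiated) universal variables from $\sequence{X}_e\setminus\dom{\sigma}$; after renaming one parent by $\lambda$ one needs the variable-to-variable mgu $\mu=\{y\is x\}$, and the side literals of the renamed parent may retain renamed variables not occurring in the pivot, so a final subsumption step mapping them back is required to recover $D_{n+1}$ exactly --- precisely the $\lambda$, $\mu$, $\lambda'$ bookkeeping in the paper. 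This is routine and does not affect the correctness of your overall argument.
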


In the proof, we construct by induction on the number of derived
clauses in the \IRcalc{} deduction stepwisely a deduction in \Rfo{}
plus subsumption. We consider the sequence of first-order clauses
obtained from the original clauses as a skeleton for the final
proof. Since the clauses in the skeleton do not follow by a single
application of an inference rule, we have to provide a short deduction
of the clauses.

\begin{proof}
We utilize Proposition~\ref{prop:R1subs-psim-R1} and allow subsumption
in the simulation.  The proof is by strong mathematical induction on
the number of derived clauses in the \IRcalc{} deduction. Let $P(n)$
denote the statement ``Given a \IRcalc{} deduction $C_1, \ldots , C_n$
from a QBF $Q.M$ and a sequence of first-order clauses $D_1, \ldots ,
D_n$, the clause $D_n$ has a short deduction in \Rfo{} plus
subsumption from $p(f_1), \neg p(f_0), \SkQBFtoPL{M}{p}{f}, D_1,
\ldots , D_{n-1}$''.

\medskip\noindent \base: $n=1$. $C_1$ is a consequence of the axiom
rule using clause $C$ from the matrix $M$. Let $\sigma$ be the
assignment induced by $C$. Then we have a clause $D\in
\SkQBFtoPL{M}{p}{f}$ from which we can derive $D_1\sigma$ by resolution
steps using $p(f_1)$ and $\neg p(f_0)$. The number of these steps is
equal to the number of universal variables in $C$.

\medskip\noindent \ih: Suppose $P(1), \ldots , P(n)$ hold for some
$n\geq 1$.

\medskip\noindent \step: We have to show $P(n+1)$. Consider $C_1,
\ldots , C_{n+1}$ and $D_1, \ldots , D_{n+1}$.

\medskip\noindent \sccase{1}: $C_{n+1}$ is derived by the axiom
rule. Then proceed like in the base case.

\medskip\noindent \sccase{2}: $C_{n+1}$ is a consequence of the
rule \Inst{} with premise $C_i$ (for some $i$ with $1\leq i\leq n$)
and assignment $\tau$. By \ih{} and Remark~\ref{rem:clauses-as-sets},
we have a short \Rfo{} plus subsumption deduction of $D_i = \{(\neg)
p(f_e(\sequence{X}_e)) \sigma \mid (\neg) e^{\restrict{\sigma}} \in
C_i\}$. $C_{n+1}$ is of the form $\{(\neg)
e^{\restrict{\sigma\complassign \tau}} \mid (\neg)
e^{\restrict{\sigma}} \in C_i\}$.  By
Lemma~\ref{lem:compl-is-composition}, $x(\sigma\complassign\tau) =
x\sigma\tau$ for any universal variable $x$ with
$\level{x}<\level{e}$. Therefore $D_{n+1}$ is of the form $\{(\neg)
p(f_e(\sequence{X}_e)) \sigma\tau \mid (\neg)
e^{\restrict{\sigma\complassign \tau}} \in C_{n+1}\}$. Now $D_{n+1} =
D_i\tau$ and $D_{n+1}$ can be derived by subsumption.

\medskip\noindent \sccase{3}: $C_{n+1}$ is a consequence of the
rule \Fac{} with premise $C_i: \widetilde{C}_i \lor \ell^{\tau} \lor
\ell^{\tau}$ (for some $i$ with $1\leq i\leq n$).  By \ih, we have a
short \Rfo{} plus subsumption deduction of $D_i\colon \widetilde{C}_i
\lor L \lor L$, where $L$ is of the form $(\neg)
p(f_e(\sequence{X}_e)\tau$. We generate a factor $D_{n+1}$ of $D_i$
  simply by omitting one of the duplicates. 

\medskip\noindent \sccase{4}: $C_{n+1}$ is a consequence of the
resolution rule with parent clauses $C_i, C_j$ (for some $i,j$ with
$1\leq i,j\leq n$). By \ih, we have two clauses 
\begin{displaymath}
D_i = \{p(f_e(\sequence{X}_e)) \sigma\} \cup D_i' 
\qquad  \text{and} \qquad  
D_j =  \{\neg p(f_e(\sequence{X}_e)) \sigma\} \cup D_j'
\end{displaymath}
We use $\lambda$ of the form $\{x\is y\}$ as a renaming of the
variables in $D_j$ such that $D_j\lambda$ does not share any variable
with $D_i$.  The resolvent is $D_i' \cup D_j'\lambda\mu$ where $\mu$ is
the mgu of the form $\{y\is x\mid x\notin \dom{\sigma}\}$. We add
  $D_i' \cup D_j'\lambda\mu\lambda'$ by subsumption, where $\lambda'$
  maps all remaining variables $y$ to their $x$ counterpart.  \qed
\end{proof}

\begin{corollary}\label{cor:Rfo-EPR-psim-IRcalc}
\Rfo{} $+$ $\EPRQBFtoPL{\cdot}{p}{f}$
polynomially simulates \IRcalcPM{\cdot}{\cdot}.
\end{corollary}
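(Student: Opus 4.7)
The plan is to repeat the inductive construction from the proof of Theorem~\ref{Rfo-psim-IRcalc} essentially verbatim, with the first-order clauses $D_i$ now built using the EPR translation rather than the Skolem translation. Specifically, an annotated clause $\{(\neg) e_1^{\restrict{\sigma_1}}, \ldots, (\neg) e_k^{\restrict{\sigma_k}}\}$ is mapped to the first-order clause whose $i$-th literal is $(\neg) p_{e_i}(\sequence{X}_{e_i})\sigma_i$ when $e_i$ depends on at least one universal variable, and $(\neg) p(f_{e_i})$ otherwise. The key structural observation is that the EPR translation only renames atoms of the form $p(f_b(X))$ (with $f_b$ a proper function symbol) to $p_b(X)$; atoms coming from universal variables ($p(x)$) and from truth constants ($p(f_0)$, $p(f_1)$) are left untouched.

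I would then run the same strong induction as in Theorem~\ref{Rfo-psim-IRcalc}. The base case is essentially unchanged: an EPR matrix clause in $\EPRQBFtoPL{M}{p}{f}$ still contains the literals $(\neg) p(x)$ for each universal $x \in C$, and the atoms $p(f_0),p(f_1)$ remain available, so the universal literals in the axiom clause can be eliminated by the same sequence of resolutions with $p(f_1)$ and $\neg p(f_0)$, at the cost of one resolution per universal literal. The inductive cases for \Inst{} (handled through subsumption, exploiting Lemma~\ref{lem:compl-is-composition}) and for \Fac{} transfer verbatim, since these steps manipulate only arguments and duplicated literals and are oblivious to whether existential atoms are written $p(f_e(\sequence{X}_e))\sigma$ or $p_e(\sequence{X}_e)\sigma$.

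The resolution case is where the EPR change needs a brief verification, and it is the place I expect any subtlety to arise. Resolving on an existential variable $e$ in \IRcalcPM{\cdot}{\cdot} corresponds in the EPR setting to unifying $p_e(\sequence{X}_e)\sigma$ with $p_e(\sequence{X}_e)\sigma'$; since both atoms use the same predicate symbol $p_e$ (determined by the same existential variable $e$), the mgu is obtained by component-wise unification of the argument sequences, exactly as it was obtained on the arguments of $f_e$ in the Skolem proof. The main obstacle is ruling out that the Skolem simulation silently exploited the fact that all existential atoms share the outer predicate symbol $p$ (which could, in principle, enable unintended resolutions between atoms from distinct existential variables); inspection of the four cases in the proof of Theorem~\ref{Rfo-psim-IRcalc} confirms that every resolution between constructed clauses $D_i, D_j$ is on atoms coming from the same existential variable, so each such resolvent is obtainable in the EPR setting as well. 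Hence the number and size of first-order inferences match those of the Skolem simulation up to a constant factor, yielding the claimed polynomial bound via Proposition~\ref{prop:R1subs-psim-R1}.
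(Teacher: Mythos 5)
Your proposal is correct and follows essentially the route the paper intends: the corollary is obtained by rerunning the simulation of Theorem~\ref{Rfo-psim-IRcalc} under the EPR renaming, observing that the base-case resolutions against $p(f_1)$ and $\neg p(f_0)$ still apply to the untouched universal atoms $p(x)$, and that every resolution between derived clauses is on two atoms stemming from the same existential variable, hence on the same predicate symbol $p_e$. Your explicit check that no step of the Skolem simulation exploits the shared outer predicate $p$ is exactly the point that makes the transfer legitimate, given that the paper elsewhere stresses that the EPR translation blocks some resolutions available under the Skolem translation.
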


When we inspect the translation of (axiom) clauses, we observe
that a universal variable $x$ is translated to an atom of the form
$p(x)$.  With the subsumption rule we can instantiate the clause by a
substitution of the form $\{x\is t\}$ for a term $t$.  This
observation was the trigger to introduce the stronger calculus
\IRcalcPMSubst{\cdot}{\cdot}{\cdot}, where universal variables cannot
be replaced only by $0$ or $1$ but also by any existential variable
$e$ with $\level{e} < \level{x}$.

\section{Exponential separation of resolution calculi} 
\label{sec:exp-sep} 

We constructed in \cite{DBLP:conf/sat/Egly12} a family
$(\Phi_n)_{n\geq 1}$ of short closed QBFs in PCNF for which any
\qres{} refutation of $\Phi_n$ is superpolynomial. We recapitulate the construction here. The formula $\Phi_n$
is
\begin{equation}\label{eqn:Phi_n}
\exists^b X_n \forall^b Y_n \exists^b Z_n
\big( \TPHPpVar{n}{Y_n,Z_n}  \land \PHPVar{n}{X_n} \big)\enspace .
\end{equation}
$\PHPVar{n}{X_n}$ is the pigeon hole formula for $n$ holes and $n+1$
pigeons in \emph{conjunctive} normal form and denoted over the
variables $X_n = \{x_{1,1} , \ldots , x_{n+1, n}\}$. Variable
$x_{i,j}$ is intended to denote that pigeon $i$ is sitting in hole
$j$. \PHPVar{n}{X_n} is
\begin{equation}
  \nonumber
  \bigg(\bigwedge_{i=1}^{n+1} \big(\bigvee_{j=1}^{n} x_{i,j} \big) \bigg) \land 
  \bigg(\bigwedge_{j=1}^{n} \bigwedge_{1\leq i_1<i_2\leq n+1} 
  (\neg x_{i_1,j} \lor \neg x_{i_2,j})\bigg)\enspace .
\end{equation}
The number of clauses in \PHPVar{n}{X_n} is $l_n = (n+1) + n^2(n+1)/2$
and $\size{\PHPVar{n}{X_n}}$ is $O(n^3)$. The formula
\TPHPpVar{n}{Y_n,Z_n} is obtained from the pigeon hole formula in
\emph{disjunctive} normal form, \PHPpVar{n}{Y_n}, by a
structure-preserving polarity-sensitive translation to clause form
\cite{DBLP:journals/jsc/PlaistedG86}. The formula \PHPpVar{n}{Y_n} is
simply the negation of $\PHPVar{n}{Y_n}$ where negation has been
pushed in front of atoms and double-negation elimination has been
applied.

We use new variables of the form $z_{i_1,i_2,j}$ for disjuncts in
\PHPpVar{n}{Y_n}.  For the first $n+1$ disjuncts of the form
$\bigwedge_{j=1}^{n} \neg y_{i,j}$
with $1\leq i\leq n+1$, we use variables $z_{1,0,0}, \ldots ,
z_{n+1,0,0}$. For the second part, for any $1\leq j\leq n$ and the
$n(n+1)/2$ disjuncts, we use
\begin{equation}
\label{eqn:varset2}
z_{1,2,j}, \, \ldots \, , z_{1,n+1,j}, \, z_{2,3,j}, \, \ldots \, , 
z_{2,n+1,j}, \, \ldots \, , z_{n,n+1,j}\enspace .
\end{equation}
The set of these variables for $\PHPpVar{n}{\mbox{}}$ is denoted by
$Z_n$.  Due to this construction, we can speak about the conjunction
corresponding to the variable $z_{i_1,i_2,j}$.

We construct the conjunctive normal form $\TPHPpVar{n}{Y_n,Z_n}$ of
$\PHPpVar{n}{Y_n,Z_n}$ as follows.  First, we take the clause
$D_n^{Z_n} = \bigvee_{z\in Z_n} \, \neg z$ over all variables in
$Z_n$.
The formula $P_n^{Y_n,Z_n}$ for the first $(n+1)$
disjuncts of \PHPpVar{n}{Y_n} is of the form
\begin{equation}
\nonumber
  \bigwedge_{i=1}^{n+1} \bigwedge_{j=1}^{n} (z_{i,0,0} \lor \neg y_{i,j})
\enspace . 
\end{equation}
For the remaining $n^2(n+1)/2$ disjuncts of \PHPpVar{n}{Y_n}, we have
the formula $Q_n^{Y_n,Z_n}$
\begin{eqnarray*}
& &  
\bigwedge_{j=1}^{n} \bigwedge_{1\leq i_1<i_2\leq n+1} 
\big((z_{i_1,i_2,j} \lor y_{i_1,j}) \land
  (z_{i_1,i_2,j} \lor y_{i_2,j})\big)\enspace .
\end{eqnarray*}
Then \TPHPpVar{n}{Y_n,Z_n} is $D_n^{Z_n} \land P_n^{Y_n,Z_n} \land
Q_n^{Y_n,Z_n}$ and $\size{\TPHPpVar{n}{Y_n,Z_n}}$ is $O(n^3)$.  
It is easy to check that $\PHPpVar{n}{Y_n} \lequiv \exists^b Z_n \,
\TPHPpVar{n}{Y_n,Z_n} $ is valid.

Let us  modify the quantifier prefix of $\Phi_n$. By
quantifier shifting rules we get, in an
``antiprenexing'' step, the equivalent formula
$(\forall^b Y_n \exists^b Z_n \TPHPpVar{n}{Y_n,Z_n}) \land 
( \exists^b X_n \PHPVar{n}{X_n})$.
Prenexing yields the equivalent QBF $\Omega_n$
\begin{equation}\label{eqn:Omega_n}
\forall^b Y_n \exists^b Z_n \exists^b X_n \big( \TPHPpVar{n}{Y_n,Z_n}
\land \PHPVar{n}{X_n} \big)
\end{equation}
which has only one quantifier alternation instead of two.  In
\cite{DBLP:conf/sat/Egly12} we showed that $\Phi_n$ and $\Omega_n$
have short cut-free tree proofs in a sequent system \Gqvetree, where
weak quantifiers introduce atoms.  The following extends Proposition~3
in \cite{DBLP:conf/sat/Egly12}.

\begin{proposition}\label{prop:SAT12:prop3}
Any \qres{} refutation of $\Phi_n$ from (\ref{eqn:Phi_n}) and
$\Omega_n$ from (\ref{eqn:Omega_n}) has superpolynomial size.
\end{proposition}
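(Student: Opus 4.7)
The plan is to exploit the fact that the matrix of both $\Phi_n$ and $\Omega_n$ splits into two variable-disjoint subformulas, $\PHPVar{n}{X_n}$ and $\TPHPpVar{n}{Y_n,Z_n}$, and to argue that every \qres{} refutation must contain, as a sub-derivation, a propositional resolution refutation of the pigeon-hole formula $\PHPVar{n}{X_n}$. Haken's classical $2^{\Omega(n)}$ lower bound on propositional resolution refutations of $\PHPVar{n}{X_n}$ then yields the claim.

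The key ingredient I would establish is a \emph{purity} invariant: every clause occurring in a \qres{} refutation of $\Phi_n$ (or of $\Omega_n$) contains either only literals over $X_n$ (call such clauses $X$-\emph{pure}) or only literals over $Y_n\cup Z_n$ (call them $YZ$-\emph{pure}). The proof is by induction on the derivation. All matrix axioms are pure by construction, since the $\PHPVar{n}{X_n}$-clauses mention only $X_n$, while the clauses in $D_n^{Z_n}$, $P_n^{Y_n,Z_n}$ and $Q_n^{Y_n,Z_n}$ mention only $Y_n\cup Z_n$. Every resolution step pivots on an existential variable shared by its two parents; since $X_n$ is disjoint from $Y_n\cup Z_n$, the two parents necessarily belong to the same class and the resolvent stays in that class. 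Factoring acts on a single clause, and \ForallRed{} applies only to clauses carrying a universal literal, necessarily from $Y_n$, hence only to $YZ$-pure clauses; both rules preserve the class.

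With purity in hand, the proof DAG decomposes into two sub-derivations on disjoint variable sets, and the empty clause $\Box$ must be produced within a single class. A $YZ$-pure sub-derivation of $\Box$ would, by soundness of \qres{}, force the closed sub-QBF $\forall^b Y_n\exists^b Z_n\, \TPHPpVar{n}{Y_n,Z_n}$ to be false. But this QBF is true: by the construction of the structure-preserving translation one has $\exists^b Z_n\, \TPHPpVar{n}{Y_n,Z_n} \lequiv \PHPpVar{n}{Y_n} \lequiv \neg \PHPVar{n}{Y_n}$, and $\PHPVar{n}{Y_n}$ is propositionally unsatisfiable by the usual counting argument, so $\neg \PHPVar{n}{Y_n}$ holds under every $Y_n$-assignment. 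Therefore $\Box$ is derived entirely in the $X$-pure sub-derivation, which, using no universal variables and no \ForallRed{} steps, is a genuine propositional resolution refutation of $\PHPVar{n}{X_n}$.

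Applying Haken's theorem to this sub-derivation gives a $2^{\Omega(n)}$ lower bound on its size, which transfers to the full \qres{} refutation. The argument is insensitive to the quantifier prefix beyond the existential/universal classification, so it applies verbatim to $\Omega_n$: the matrix is identical, and the fact that $Y_n$ is outermost in $\Omega_n$ only strengthens the \ForallRed{} side-condition and hence the purity invariant. The conceptual heart of the proof, and the step I would write out most carefully, is the decomposition observation together with the soundness argument excluding a $YZ$-pure derivation of $\Box$; the remaining case-checks on the resolution and \ForallRed{} rules are routine.
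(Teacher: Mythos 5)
Your proof is correct and takes essentially the same route as the paper's (sketched) argument: disjointness of the alphabets of $\PHPVar{n}{X_n}$ and $\TPHPpVar{n}{Y_n,Z_n}$ plus the absence of instantiation in \qres{} confines the refutation to one conjunct, the truth of $\forall^b Y_n\exists^b Z_n\,\TPHPpVar{n}{Y_n,Z_n}$ excludes the $YZ$ side, and Haken's lower bound finishes the job. Your purity invariant is simply a more explicit write-up of what the paper states informally.
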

The proof is based on the fact that (i) the two conjuncts belong to
languages with different alphabets and (ii) that the alphabets cannot
be made identical by instantiation of quantifiers in \qres. Therefore
we have to refute either \TPHPpVar{n}{Y_n,Z_n} or \PHPVar{n}{X_n}
under the given quantifier prefix.  Since $\forall^b Y_n \exists^b
Z_n\TPHPpVar{n}{Y_n,Z_n}$ is true, there is no \qres{} refutation and
we have to turn to $\exists^b X_n \PHPVar{n}{X_n}$.  But then, we
essentially have to refute \PHPVar{n}{X_n} with propositional
resolution and consequently, by Haken's famous result
\cite{DBLP:journals/tcs/Haken85}, any \qres{} refutation of
$\PHPVar{n}{X_n}$ is superpolynomial in $n$.

Since \qures, \ldqres, \ldqures, \ldqupres, and Q(D)-resolution
(\qdres) \cite{SlivovskySzeider-SAT14} are based on the same
quantifier-handling mechanism as \qres, the following corollary is
obvious.

\begin{corollary}
Any refutation of $\Phi_n$ from (\ref{eqn:Phi_n}) and $\Omega_n$ from
(\ref{eqn:Omega_n}) in the \qures, \ldqres, \ldqures, \ldqupres, or
\qdres{} calculus has superpolynomial size. 
\end{corollary}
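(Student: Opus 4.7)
The plan is to show that the argument used to prove Proposition~\ref{prop:SAT12:prop3} for \qres{} extends verbatim to each of \qures, \ldqres, \ldqures, \ldqupres, and \qdres. Two ingredients were used in that proof: (i) the rules of \qres{} do not allow substitution of one variable by another, so the disjoint variable alphabets $X_n$ and $Y_n \cup Z_n$ of the two conjuncts $\PHPVar{n}{X_n}$ and $\TPHPpVar{n}{Y_n,Z_n}$ remain disjoint throughout any derivation; (ii) soundness of the calculus together with the validity of $\forall^b Y_n \exists^b Z_n \, \TPHPpVar{n}{Y_n,Z_n}$ forces any refutation to essentially refute $\PHPVar{n}{X_n}$ using propositional resolution, which by Haken's theorem~\cite{DBLP:journals/tcs/Haken85} is superpolynomial.

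For point (i), I would inspect the inference rules of each listed calculus and observe that they consist of (variants of) propositional resolution on an existential or universal pivot, together with some form of \ForallRed-style universal reduction (possibly augmented by long-distance merges of complementary universal literals in \ldqres, \ldqures, \ldqupres, or by a dependency scheme in \qdres). None of these rules is an instantiation rule: every literal appearing in a derived clause has its variable inherited from some input clause. Hence the two disjoint alphabets $X_n$ and $Y_n \cup Z_n$ cannot be merged, and every derived clause is over one alphabet only.

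For point (ii), since $\forall^b Y_n \exists^b Z_n \, \TPHPpVar{n}{Y_n,Z_n}$ is true and each of the listed calculi is sound for QBF, no refutation uses only clauses over $Y_n \cup Z_n$. Combined with the disjointness established in (i), any refutation of $\Phi_n$ or $\Omega_n$ must reach the empty clause through a sub-derivation whose clauses lie entirely in the $X_n$-alphabet. On clauses over purely existential variables, universal reduction (standard, long-distance, or dependency-aware) is vacuous, and universal pivot resolution is unavailable, so each calculus collapses to propositional resolution. The resulting refutation of $\PHPVar{n}{X_n}$ is then a propositional resolution refutation, which must have superpolynomial size by Haken's lower bound.

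The main obstacle is the careful case analysis for the long-distance variants and for \qdres. In \ldqres, \ldqures, and \ldqupres, a derived clause may contain complementary universal literals that are later eliminated by universal reduction; one must verify that such merges still keep the $X_n$-variables and $(Y_n\cup Z_n)$-variables in separate clauses (they do, since the pivot of a merging resolution is always a variable from one alphabet, and the merged literals come from that same clause). In \qdres, one must check that the dependency scheme does not create new dependencies across the two alphabets that would somehow enable a resolution bridging them; but again, since no rule substitutes variables, this does not affect the alphabet-separation argument. Once these checks are done, the superpolynomial lower bound transfers immediately from Proposition~\ref{prop:SAT12:prop3}.
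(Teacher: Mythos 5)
Your proposal is correct and follows essentially the same route as the paper, which justifies the corollary in one line by observing that all the listed calculi share the quantifier-handling mechanism of \qres{} and hence inherit the argument of Proposition~3 (alphabet disjointness is preserved because no rule instantiates variables, so the refutation must essentially be a propositional resolution refutation of the pigeonhole formula, which is superpolynomial by Haken). Your write-up merely makes explicit the checks for the long-distance and dependency-scheme variants that the paper leaves implicit.
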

For \IRcalcPM{\cdot}{\cdot} the situation is not better. Since
universal literals are only replaced by $0$, no unification of the two
alphabets can happen.

\begin{proposition}
Any refutation of $\Phi_n$ from (\ref{eqn:Phi_n}) and $\Omega_n$ from
(\ref{eqn:Omega_n}) in \IRcalcPM{\cdot}{\cdot} has size superpolynomial in $n$.
\end{proposition}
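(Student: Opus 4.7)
The plan is an adaptation of the strategy behind Proposition~\ref{prop:SAT12:prop3}. The matrices of $\Phi_n$ and $\Omega_n$ split into two vocabulary-disjoint conjuncts, \TPHPpVar{n}{Y_n,Z_n} (over $Y_n \cup Z_n$) and \PHPVar{n}{X_n} (over $X_n$), and I would argue that \IRcalcPM{\cdot}{\cdot} is too weak to interact them. Its only variable-manipulating operations are \Inst{}, \Res{}, and \Fac{}, and none of them rewrites the underlying existential atom of an annotated literal. Hence an \IRcalcPM{\cdot}{\cdot} refutation restricts to a single conjunct; since the $YZ$-conjunct is valid, the refutation must in fact be a propositional resolution refutation of \PHPVar{n}{X_n}, for which Haken's lower bound applies.

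To make this precise, I would colour every axiom introduction by the conjunct (\PHPVar{n}{X_n} or \TPHPpVar{n}{Y_n,Z_n}) it was taken from, and propagate the colour through the derivation. By induction: \Inst{} and \Fac{} preserve the colour of their premise, and a \Res{} step requires the two parents to share an existential resolvent variable, which by the induction hypothesis can only happen within one colour class because $X_n$ and $(Y_n\cup Z_n)$ are disjoint. Annotations introduced by \Inst{} are assignments to universal variables and leave the underlying existential atom untouched, so they cannot smuggle an $X$-variable into a $YZ$-coloured clause or vice versa.

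Given the monochromatic invariant, the empty clause must be derived from one single colour class. It cannot come from the $YZ$ class, because $\forall^b Y_n \exists^b Z_n \, \TPHPpVar{n}{Y_n,Z_n}$ is equivalent to $\forall^b Y_n \, \PHPpVar{n}{Y_n}$, which is valid (the negation of the false PHP); soundness of \IRcalcPM{\cdot}{\cdot} rules out such a refutation. Hence the whole refutation lies within the $X$-coloured fragment. Applying an annotation-stripping map that sends $x^{\restrict{\tau}}$ to $x$ turns the derivation into a propositional resolution refutation of \PHPVar{n}{X_n} of size at most that of the original: axioms become matrix clauses of \PHPVar{n}{X_n}, \Inst{} collapses to the identity, and \Res{} and \Fac{} remain sound propositional inferences. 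Haken's theorem then forces superpolynomial size.

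The main obstacle is the colouring invariant under \Res{}, especially for $\Omega_n$ where $X_n$ sits inside $\forall^b Y_n$ so that $X$-literals may carry nontrivial $Y_n$-annotations via \Inst{}. One has to verify that such annotations do not enable cross-conjunct resolution: annotation-matching in \Res{} presupposes agreement on the underlying existential variable, and since $X_n\cap Z_n=\emptyset$ no cross-colour resolution is possible regardless of what annotations \Inst{} produces. The remaining steps — defining the stripping map and verifying it is size-non-increasing — are routine bookkeeping, after which Haken's exponential lower bound delivers the conclusion.
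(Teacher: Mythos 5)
Your proposal is correct and follows essentially the same route as the paper: the paper's argument (given informally before the proposition, by analogy with Proposition~\ref{prop:SAT12:prop3}) is exactly that instantiation only with $0/1$ cannot unify the two disjoint alphabets, so the refutation must stay within one conjunct, the true $YZ$-conjunct is excluded by soundness, and the $X$-part reduces to a propositional resolution refutation of \PHPVar{n}{X_n} subject to Haken's lower bound. Your colouring invariant and annotation-stripping map are just a careful formalization of that same argument.
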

The quantifier prefix is unfortunate if one expects $\Omega_n$ being
false. Actually, the initial universal quantifier block prevents any
non-empty $\sigma_v$ and consequently, any
\IRcalcPMSubst{\cdot}{\cdot}{\cdot} refutation of $\Omega_n$ reduces
to an \IRcalcPM{\cdot}{\cdot} refutation of $\Omega_n$.

\begin{proposition}
Any refutation of $\Omega_n$ from (\ref{eqn:Omega_n}) in
\IRcalcPMSubst{\cdot}{\cdot}{\cdot} has size superpolynomial in $n$.
\end{proposition}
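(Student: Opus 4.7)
The plan is to exploit the quantifier prefix of $\Omega_n$ in order to show that any refutation in \IRcalcPMSubst{\cdot}{\cdot}{\cdot} necessarily collapses to an \IRcalcPM{\cdot}{\cdot} refutation, after which the previous proposition closes the argument. First I would inspect the prefix $\forall^b Y_n \exists^b Z_n \exists^b X_n$ and compute levels: every universal variable $y\in Y_n$ has $\level{y}=1$, while every existential variable $e\in Z_n\cup X_n$ has $\level{e}\in\{2,3\}$. Hence there is no pair $(x,e)$ with $x$ universal, $e$ existential, and $\level{e}<\level{x}$, so by the definition in Fig.~\ref{fig:IR-CalcPM-enhanced} the substitution $\sigma_v$ must be empty.

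Next I would trace the consequences of $\sigma_v = \epsilon$ through all inference rules. In the axiom rule, $C\sigma_v = C$, so the axiom of \IRcalcPMSubst{\cdot}{\cdot}{\cdot} coincides literally with the axiom of \IRcalcPM{\cdot}{\cdot}. The rules \Res, \Fac{} and \Inst{} are already identical to those of \IRcalcPM{\cdot}{\cdot}, so any refutation of $\Omega_n$ in \IRcalcPMSubst{\cdot}{\cdot}{\cdot} is, clause by clause, a refutation in \IRcalcPM{\cdot}{\cdot} of the very same size.

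Therefore a polynomial-size \IRcalcPMSubst{\cdot}{\cdot}{\cdot} refutation of $\Omega_n$ would yield a polynomial-size \IRcalcPM{\cdot}{\cdot} refutation of $\Omega_n$, contradicting the preceding proposition. This gives the superpolynomial lower bound.

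There is no real obstacle here: the entire argument rests on the observation that the existential-to-universal replacement mechanism is disabled whenever the prefix starts with a universal block and no existentials precede any universal. The only point worth being explicit about is that the reduction to \IRcalcPM{\cdot}{\cdot} is syntactic and preserves proof size exactly, so the lower bound transfers verbatim from the previous proposition without any additional blow-up.
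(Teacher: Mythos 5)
Your proposal is correct and follows exactly the paper's own reasoning: the initial universal block $Y_n$ means no existential variable has level smaller than any universal variable, forcing $\sigma_v=\epsilon$, so the refutation collapses syntactically (and size-for-size) to an \IRcalcPM{\cdot}{\cdot}{} refutation, to which the preceding proposition's superpolynomial lower bound applies. (A cosmetic point: under the paper's convention that consecutive quantifier blocks alternate, $Z_n$ and $X_n$ merge into one existential block at level $2$, but this does not affect your argument.)
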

In the following we show that 
$\SkQBFtoPL{\Omega_n}{p}{f}$ has a short refutation in \Rfo. We use
$f_{x_{i,j}}$ to denote the Skolem function symbol corresponding to
$x_{i,j} \in X_n$ and $f_{z_{i,j,k}}$ to denote the Skolem function
symbols corresponding to $z_{i,j,k}\in Z_n$. All the Skolem function
symbols have arity $|Y_n| = n(n+1)$.  Let $\overline{F}$ denote the
formula $F$ under the first-order translation.  We have
\begin{align*}
\overline{\PHPVar{n}{X_n}} &\colon 
  \bigg(\bigwedge_{i=1}^{n+1} \big(\bigvee_{j=1}^{n} 
  p(f_{x_{i,j}}(Y_n) \big)  \bigg)  \land \\
&  \qquad\quad
  \bigg(\bigwedge_{j=1}^{n} \bigwedge_{1\leq i_1<i_2\leq n+1} 
  (\neg p(f_{x_{i_1,j}}(Y_n)) \lor \neg p(f_{x_{i_2,j}}(Y_n)))\bigg)\enspace .
\\
\overline{D_n^{Z_n}} 
& \colon 
\bigvee_{z\in Z_n} \, \neg p(f_z(Y_n))
\\
\qquad\qquad 
\overline{P_n^{Y_n,Z_n}}
& \colon 
\bigwedge_{i=1}^{n+1} \bigwedge_{j=1}^{n} (p(f_{z_{i,0,0}}(Y_n)) \lor 
\neg p(y_{i,j}))
\\
\overline{Q_n^{Y_n,Z_n}}
& \colon 
\bigwedge_{j=1}^{n} \bigwedge_{1\leq i_1<i_2\leq n+1} 
\big((p(f_{z_{i_1,i_2,j}}(Y_n)) \lor p(y_{i_1,j})) \land \\
& \hspace*{10em}
  (p(f_{z_{i_1,i_2,j}}(Y_n)) \lor p(y_{i_2,j}))\big)\enspace .
\end{align*}

The refutation of $\SkQBFtoPL{\Omega_n}{p}{f}$ is constructed as
follows.
\begin{enumerate}
\item We use $\overline{P_n^{Y_n,Z_n}}$ together with the first $n+1$
  clauses from $\overline{\PHPVar{n}{X_n}}$ to derive
  $p(f_{z_{i,0,0}}(Y_n))\mu_i$ (for all $i=1,\ldots , n+1$). The
  deduction consists of $O(n^2)$ clauses and applies resolution and
  factoring.  The substitution $\mu_i$ is $\bigcup_{j=1}^n
  \{y_{i,j}\backslash f_{x_{i,j}}(Y_n) \sigma_{i,j}\}$, where
  $\sigma_{i,j}$ is a variable renaming from the variant generation in
  resolution.

\item We use $\overline{Q_n^{Y_n,Z_n}}$ together with the binary
  clauses from $\overline{\PHPVar{n}{X_n}}$ to derive
  $p(f_{z_{i_1,i_2,j}}(Y_n))\nu_{i_1, i_2,j}$ (for all $j=1,\ldots ,
  n$ and $i_1,i_2$ with $1\leq i_1<i_2\leq n+1$).  The deduction
  consists of $O(n^3)$ clauses and applies resolution and factoring.
Then $\nu_{i_1, i_2,j}$ is
  $\{y_{i_1,j}\backslash f_{x_{i_1,j}}(Y_n)\sigma_{i_1, i_2,j},
  y_{i_2,j}\backslash f_{x_{i_2,j}}(Y_n)\sigma_{i_1, i_2,j}\}$. Again
  $\sigma_{i_1, i_2,j}$ is a variable renaming like above.

\item We use $\overline{D_n^{Z_n}}$ together with the derived instance
  of $p(f_{z_{k,l,m}}(Y_n))$ to derive $\emptyclause$ by resolution.
  Since any variable $y_{i,j}$ is assigned to a variant of
  $f_{x_{i,j}}(Y_n)$ for all $i=1,\ldots , n+1$ and all $j=1, \ldots ,
  n$, all resolution steps are possible. The deduction consists of
  $O(n^3)$ clauses.
\end{enumerate}
The formula $\SkQBFtoPL{\Phi_n}{p}{f}$ can be refuted in a similar
fashion in \Rfo{} by replacing variants of the form $f_{x_{i,j}}(Y_n)$
by Skolem constants $a_{i,j}$.

\begin{proposition}\label{prop:Psi_n-short}
Let $(\Phi_n)_{n\geq 1}$ and $(\Omega_n)_{n\geq 1}$ be the families of
closed QBFs defined above.
Then $\QBFtoPL{\Phi_n}{p}{f}$ and $\QBFtoPL{\Omega_n}{p}{f}$ have short
\emph{tree} refutations in \Rfo{} consisting of $O(n^3)$
clauses. Moreover the size of the refutation is $O(n^8)$.
\end{proposition}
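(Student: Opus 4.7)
The plan is to show that the three-step construction already sketched in the paragraph preceding the proposition can be turned into a valid tree-form \Rfo{} refutation, and then carefully tally both the number of clauses and their size. I would proceed by first pinning down the input clauses from $\SkQBFtoPL{\Omega_n}{p}{f}$, namely the translations $\overline{D_n^{Z_n}}$, $\overline{P_n^{Y_n,Z_n}}$, $\overline{Q_n^{Y_n,Z_n}}$, and the clauses of $\overline{\PHPVar{n}{X_n}}$, checking that each Skolem symbol $f_{x_{i,j}}$, $f_{z_{i_1,i_2,j}}$ has arity $|Y_n| = n(n+1)$, so every input literal has size $O(n^2)$.

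Next I would verify correctness step by step. In step~1, for each fixed $i \in \{1,\dots, n{+}1\}$ the $n$ clauses $p(f_{z_{i,0,0}}(Y_n)) \lor \neg p(y_{i,j})$ are resolved, with variants, against the $n$-literal clause $\bigvee_{j=1}^n p(f_{x_{i,j}}(Y_n))$, followed by factoring; the unifier is $\{y_{i,j}\backslash f_{x_{i,j}}(Y_n)\sigma_{i,j}\}$, where $\sigma_{i,j}$ renames the $Y_n$-variables in the variant. This produces the unit clause $p(f_{z_{i,0,0}}(Y_n))\mu_i$ using $O(n)$ clauses per $i$, so $O(n^2)$ clauses overall. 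Step~2 is analogous: for each $j$ and pair $i_1 < i_2$, the two binary clauses of $\overline{Q_n^{Y_n,Z_n}}$ are resolved against variants of the binary pigeon-hole clauses $\neg p(f_{x_{i_1,j}}(Y_n)) \lor \neg p(f_{x_{i_2,j}}(Y_n))$, each derivation taking $O(1)$ clauses, totalling $O(n^3)$. Step~3 resolves $\overline{D_n^{Z_n}}$ against the $|Z_n|= O(n^3)$ derived unit clauses (each resolution unifies $\neg p(f_{z_{k,l,m}}(Y_n))$ with the corresponding $p(f_{z_{k,l,m}}(Y_n))\sigma$), producing $\emptyclause$ in $O(n^3)$ clauses. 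Summing, the refutation has $O(n^3)$ clauses.

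For the size bound I would argue as follows. Each individual literal has a Skolem term of arity $O(n^2)$; after composing with the substitutions $\mu_i, \nu_{i_1,i_2,j}$, which replace some of the $y$-variables by terms of size $O(n^2)$, each literal has size $O(n^3)$. The largest clauses occur in step~3, where the residue of $\overline{D_n^{Z_n}}$ has up to $O(n^3)$ literals of size $O(n^3)$, giving clause size $O(n^5)$. Combined with the $O(n^3)$ clauses this yields total size $O(n^8)$. Tree form follows because each input clause is used by taking a fresh variant at every occurrence, and each derived clause is used as a parent at most once in the way the three steps are chained. The same argument applies verbatim to $\SkQBFtoPL{\Phi_n}{p}{f}$, after replacing each Skolem term $f_{x_{i,j}}(Y_n)$ by the Skolem constant $a_{i,j}$; arities shrink but the clause-count and (looser) size bounds remain valid.

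The main obstacle I expect is the size accounting: the naive bound one gets by multiplying ``number of clauses'' by ``size of the largest clause'' is tight only because substitutions compose along the resolution chain, so some care is needed to confirm that the substitutions applied do not cascade and blow up the term depth (each $y_{i,j}$ is replaced exactly once, by a term not containing any $y$-variable inside a Skolem position, so depths stay bounded by $2$). Once this is verified, the $O(n^8)$ bound follows by the product estimate above.
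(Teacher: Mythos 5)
Your construction is exactly the one the paper uses (the three-step refutation of $\SkQBFtoPL{\Omega_n}{p}{f}$ sketched immediately before the proposition, plus the constant-replacement remark for $\Phi_n$), and your verification of the $O(n^3)$ clause count and of tree form is sound. The gap is in the size accounting for step~3. First, a local arithmetic slip: $O(n^3)$ literals of size $O(n^3)$ give clause size $O(n^6)$, not $O(n^5)$. More importantly, the literal-size bound $O(n^3)$ is itself too low for the residues of $\overline{D_n^{Z_n}}$. Every literal $\neg p(f_z(Y_n))$ of that clause carries \emph{all} $n(n+1)$ variables of $Y_n$ as arguments, and the mgu of each resolution in step~3 is applied to the entire remaining clause; once the bindings $y_{i,j}\backslash f_{x_{i,j}}(\cdots)$ have accumulated (and all $n(n+1)$ of them must accumulate before $\emptyclause$ is reached), each surviving literal has all $n(n+1)$ argument positions filled by terms of size $\Theta(n^2)$, i.e.\ literal size $\Theta(n^4)$ and clause size up to $\Theta(n^7)$. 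Your depth-$2$ observation is correct but controls term \emph{depth}, not this breadth effect. The product estimate then yields $O(n^{10})$ rather than $O(n^8)$, and no reordering of the $O(n^3)$ resolutions against $\overline{D_n^{Z_n}}$ avoids having $\Omega(n^3)$ literals of size $\Omega(n^4)$ for $\Omega(n^3)$ consecutive steps.

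To be fair, the paper asserts the $O(n^8)$ figure without any detailed computation, and the apparent source of that figure --- $O(n^3)$ clauses times the $O(n^5)$ size of the largest \emph{input} clause $\overline{D_n^{Z_n}}$ --- overlooks the same growth under instantiation. The discrepancy is harmless for everything the proposition is used for (Theorems~\ref{thm:usual-calculi-npsim-Rfo} and~\ref{thm:EPR-notpsim-SK} only need a polynomial bound), but as written your argument does not establish the stated exponent; you should either redo the tally with the $\Theta(n^4)$ literal size and claim the weaker polynomial bound, or find a genuinely different derivation order/representation that keeps the step-3 intermediates small.
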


\begin{restatable}{thm}{NoPsimRfo}
\label{thm:usual-calculi-npsim-Rfo}
The calculi \qures, \ldqres, \ldqures, \ldqupres{}, \qdres,
\IRcalcPM{\cdot}{\cdot}, \IRcalcPMSubst{\cdot}{\cdot}{\cdot}{} and
\IRMcalc{} cannot polynomially simulate tree \Rfo{} $+$ 
$\SkQBFtoPL{\cdot}{p}{f}$ or \Rfo{} $+$ $\EPRQBFtoPL{\cdot}{p}{f}$.
\end{restatable}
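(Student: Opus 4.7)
The plan is to use the families $(\Phi_n)_{n\geq 1}$ and $(\Omega_n)_{n\geq 1}$ constructed above as witnesses for the separation. By Proposition~\ref{prop:Psi_n-short}, $\QBFtoPL{\Phi_n}{p}{f}$ and $\QBFtoPL{\Omega_n}{p}{f}$ already admit tree \Rfo{} refutations of size $O(n^8)$. The same refutation skeleton survives under the EPR translation: each atom of the form $p(f_e(Y_n))$ is uniformly renamed to $p_e(Y_n)$, and since the refutation never unifies two distinct $f_e$ symbols, the renamed resolution and factoring steps remain valid, giving tree refutations of the same asymptotic size for $\EPRQBFtoPL{\cdot}{p}{f}$ as well.

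For the lower bounds on the QBF calculi, I would invoke the results that are already in place. The quantifier-handling mechanism of \qres{} is shared by \qures, \ldqres, \ldqures, \ldqupres{} and \qdres, so the corollary following Proposition~\ref{prop:SAT12:prop3} immediately yields superpolynomial lower bounds for all of these on $\Phi_n$ and $\Omega_n$. The two propositions preceding the theorem handle \IRcalcPM{\cdot}{\cdot} on both families and \IRcalcPMSubst{\cdot}{\cdot}{\cdot}{} on $\Omega_n$: here the leading universal block of $\Omega_n$ forces the substitution $\sigma_v$ to be empty, so the enhanced calculus collapses to \IRcalcPM{\cdot}{\cdot}{}. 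In each case the underlying reason is the same: the two conjuncts $\TPHPpVar{n}{Y_n,Z_n}$ and $\PHPVar{n}{X_n}$ live over disjoint variable alphabets, no inference rule of these calculi can bridge the two alphabets, and since $\forall^b Y_n \exists^b Z_n\,\TPHPpVar{n}{Y_n,Z_n}$ is true, every refutation must reduce to a propositional refutation of $\PHPVar{n}{X_n}$, which is superpolynomial by Haken's lower bound.

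The remaining and main obstacle is the extension of this lower bound to \IRMcalc, which is not explicitly covered by an earlier proposition. I would argue analogously: annotations in \IRMcalc{} are assignments of universal variables to $\{0,1,\ast\}$, and neither the instantiation rule nor the merge rule ever introduces a literal over a new variable symbol. Hence the partitioning of the matrix along $X_n$ versus $Y_n\cup Z_n$ is preserved by every \IRMcalc{} derivation, and the argument above applies verbatim: every refutation must ultimately refute $\exists^b X_n \PHPVar{n}{X_n}$ by essentially propositional means, which forces superpolynomial size by Haken. Combining these lower bounds with the short tree \Rfo{} refutations of Proposition~\ref{prop:Psi_n-short} yields the desired non-simulation results for both $\SkQBFtoPL{\cdot}{p}{f}$ and $\EPRQBFtoPL{\cdot}{p}{f}$.
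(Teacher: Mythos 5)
Your overall strategy---pair the superpolynomial lower bounds on $\Phi_n$ and $\Omega_n$ for the listed QBF calculi with the $O(n^8)$ tree \Rfo{} refutations of Proposition~\ref{prop:Psi_n-short}, and handle \IRMcalc{} by the same disjoint-alphabet argument---is the intended one, and your sketch for \IRMcalc{} (no rule introduces literals over new variable symbols, so the $X_n$ versus $Y_n\cup Z_n$ partition is preserved and Haken's bound applies) is in the right spirit. But there is a genuine error in your treatment of the EPR translation. You claim that ``the same refutation skeleton survives under the EPR translation'' because the refutation ``never unifies two distinct $f_e$ symbols.'' This misses where the unification actually happens: the short refutation of $\SkQBFtoPL{\Omega_n}{p}{f}$ lives on resolving $\neg p(y_{i,j})$ (from $\overline{P_n^{Y_n,Z_n}}$ and $\overline{Q_n^{Y_n,Z_n}}$) against $p(f_{x_{i,j}}(Y_n))$ (from $\overline{\PHPVar{n}{X_n}}$), i.e.\ on instantiating the universal variable $y_{i,j}$ by the Skolem term $f_{x_{i,j}}(Y_n)$. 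Under $\EPRQBFtoPL{\cdot}{p}{f}$ the latter literal becomes $p_{x_{i,j}}(Y_n)$ while the former keeps the predicate symbol $p$, so exactly these bridging resolutions are blocked. The paper proves precisely this: the proposition preceding Theorem~\ref{thm:EPR-notpsim-SK} states that $\EPRQBFtoPL{\Omega_n}{p}{f}\land p(f_1)\land\neg p(f_0)$ admits only superpolynomial \Rfo{} refutations, and Theorem~\ref{thm:EPR-notpsim-SK} exists \emph{because} the two translations are not interchangeable. So your claim contradicts an established result of the paper.

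The repair is to use $\Phi_n$, not $\Omega_n$, as the witness for the $\EPRQBFtoPL{\cdot}{p}{f}$ half: there the Skolem terms for $X_n$ are constants $a_{i,j}$ (arity $0$), the EPR translation leaves $p(a_{i,j})$ untouched, and the resolutions with $\neg p(y_{i,j})$ go through, so $\EPRQBFtoPL{\Phi_n}{p}{f}$ does have a short tree refutation. You should then note that the lower bounds must be checked on $\Phi_n$ rather than $\Omega_n$; this is unproblematic for \qres, \qures, the long-distance variants, \qdres{} and \IRcalcPM{\cdot}{\cdot} (the cited propositions cover $\Phi_n$), but observe that the paper's lower bound for \IRcalcPMSubst{\cdot}{\cdot}{\cdot} is argued only for $\Omega_n$, via the leading $\forall^b Y_n$ block forcing $\sigma_v=\epsilon$ --- an argument that does not transfer to $\Phi_n$, whose prefix $\exists^b X_n\forall^b Y_n\exists^b Z_n$ permits a non-empty $\sigma_v$. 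For that calculus your proof should therefore keep $\Omega_n$ and the $\SkQBFtoPL{\cdot}{p}{f}$ translation (for which $\Omega_n$ is easy for tree \Rfo{} and hard for \IRcalcPMSubst{\cdot}{\cdot}{\cdot}), rather than pretending a single witness family settles every pairing in the theorem.
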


We use $(\Omega_n)_{n\geq 1}$ to exponentially separate \Rfo{}
combined with the two translations, i.e., we compare
$\SkQBFtoPL{\cdot}{p}{f}$ with $\EPRQBFtoPL{\cdot}{p}{f}$.

\begin{proposition}
Let $(\Omega_n)_{n\geq 1}$ be the family of closed QBFs defined above
and let $\Omega'_n$ be the EPR formula
$\EPRQBFtoPL{\Omega_n}{p}{f}\land p(f_{1}) \land \neg p(f_{0})$.  Then
$\Omega'_n$ has only refutation in \Rfo{} of size superpolynomial in
$n$.
\end{proposition}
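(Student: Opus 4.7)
The plan is to exploit the fact that the EPR translation turns the Skolem function symbols $f_{x_{i,j}}$ and $f_{z_{i_1,i_2,j}}$ into \emph{new} predicate symbols $p_{x_{i,j}}$ and $p_{z_{i_1,i_2,j}}$, while the universal $y$-variables remain under the original unary predicate $p$. Consequently the input clauses of $\Omega'_n$ split into two groups with disjoint predicate alphabets: the \emph{PHP-silo} $\caA_1 = \overline{\PHPVar{n}{X_n}}$ uses only predicates in $\{p_{x_{i,j}}\}$, whereas the \emph{TPHP-silo} $\caA_2$, consisting of $\overline{D_n^{Z_n}}$, $\overline{P_n^{Y_n,Z_n}}$, $\overline{Q_n^{Y_n,Z_n}}$ together with the two unit axioms $p(f_1)$ and $\neg p(f_0)$, uses only predicates in $\{p\} \cup \{p_{z_{i_1,i_2,j}}\}$.

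The first step is a straightforward induction on the length of an \Rfo{} derivation showing that every derived clause has all its literals over exactly one of these two alphabets. The base case follows from the classification above. For the inductive step, a resolution step requires two literals over a common predicate symbol; since the two alphabets are disjoint, only clauses from the same silo can be resolved, and the resolvent inherits that silo. Factoring and variant renaming preserve silos trivially, so the empty clause $\emptyclause$ must be derived inside a single silo.

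Next, I would verify that $\caA_2$ is satisfiable. The QBF $\forall^b Y_n \exists^b Z_n\, \TPHPpVar{n}{Y_n,Z_n}$ is equivalent to $\forall^b Y_n\, \PHPpVar{n}{Y_n}$, which is valid because $\PHPpVar{n}{Y_n}$ is the negation of the unsatisfiable $\PHPVar{n}{Y_n}$. By Proposition~\ref{prop:sat-equiv} together with the satisfiability-preservation of Skolemization and of the replacement of Skolem terms by atomic predicates used in $\EPRQBFtoPL{\cdot}{p}{f}$, its EPR image is satisfiable; any witnessing model can be extended to satisfy $p(f_1)$ and $\neg p(f_0)$. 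By soundness of \Rfo, $\emptyclause$ therefore cannot originate from $\caA_2$, and the refutation must derive $\emptyclause$ entirely within $\caA_1$.

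Finally, in $\caA_1$ every atom carries the identical argument list $Y_n$, and taking variants merely renames that list uniformly; any unification between two atoms $p_{x_{i,j}}(Y_n)$ and $p_{x_{i,j}}(Y_n')$ simply identifies the two renamed lists. Consequently, an \Rfo{} refutation of $\caA_1$ projects, by collapsing each atom $p_{x_{i,j}}(\cdot)$ to the propositional variable $x_{i,j}$, to a propositional resolution refutation of $\PHPVar{n}{X_n}$ of no larger size. Haken's classical lower bound \cite{DBLP:journals/tcs/Haken85} then forces the refutation of $\caA_1$, and hence the refutation of $\Omega'_n$, to have size superpolynomial (in fact exponential) in $n$. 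The main obstacle is the careful bookkeeping in the silo-invariance induction and in checking that the projection to propositional PHP really does preserve resolution steps; once those are formalized, the lower bound is immediate.
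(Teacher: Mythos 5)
Your proposal is correct and follows essentially the same route as the paper's own (much terser) sketch: the EPR translation places the \PHPText{} and \TPHPText{} parts into disjoint predicate alphabets, so no resolution can cross between them; the \TPHPText{} part is satisfiable because $\forall^b Y_n \exists^b Z_n\, \TPHPpVar{n}{Y_n,Z_n}$ is true, so the empty clause must come from the (essentially propositional) \PHPText{} part, where Haken's lower bound applies. Your version merely spells out the silo-invariance induction and the projection to propositional resolution that the paper leaves implicit.
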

\begin{proof}[Sketch]
Similar arguments as in Proposition~\ref{prop:SAT12:prop3} apply,
because the EPR translations of $\TPHPpVar{n}{Y_n,Z_n}$ and
$\PHPVar{n}{X_n}$ are denoted in different languages and literals from
the former cannot be resolved with literals from the latter. Again,
the pigeonhole formula has to be refuted. Consequently, the
(essentially propositional) resolution proof has size superpolynomial
in $n$.  \qed
\end{proof}

\begin{restatable}{thm}{ERPNotPsimSK}
\label{thm:EPR-notpsim-SK}
\Rfo{} $+$ $\EPRQBFtoPL{\cdot}{p}{f}$ cannot polynomially
simulate \emph{tree} \Rfo{} $+$ $\SkQBFtoPL{\cdot}{p}{f}$.
\end{restatable}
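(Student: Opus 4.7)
The plan is to use the family $(\Omega_n)_{n\geq 1}$ defined via (\ref{eqn:Omega_n}) as a separating example: we already have a short upper bound in tree \Rfo{} $+$ $\SkQBFtoPL{\cdot}{p}{f}$ from Proposition~\ref{prop:Psi_n-short}, and a superpolynomial lower bound in \Rfo{} $+$ $\EPRQBFtoPL{\cdot}{p}{f}$ from the immediately preceding proposition about $\Omega'_n$. So the theorem follows by putting these two bounds together.

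Concretely, I would first recall that Proposition~\ref{prop:Psi_n-short} gives a refutation of $\SkQBFtoPL{\Omega_n}{p}{f}$ in tree \Rfo{} of size $O(n^8)$, hence polynomial in $\size{\Omega_n}=O(n^3)$. Second, I would invoke the preceding proposition, which states that any \Rfo{} refutation of $\Omega'_n = \EPRQBFtoPL{\Omega_n}{p}{f}\land p(f_1)\land \neg p(f_0)$ has size superpolynomial in $n$. Putting these together, no polynomial $p$ can bound refutation sizes under the EPR translation in terms of refutation sizes under the Skolem translation for this family, which is exactly the negation of p-simulation in the sense defined in Section~\ref{sec:prel}.

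The only point that requires a moment of care is that the two translations operate on the same input QBFs, so that the notion of p-simulation between the combined systems is well-defined; since both $\SkQBFtoPL{\cdot}{p}{f}$ and $\EPRQBFtoPL{\cdot}{p}{f}$ are polynomial-time translations of the same QBF family $(\Omega_n)_{n\geq 1}$, this is immediate. I would also note that the tree restriction on the upper bound only strengthens the separation, since any tree refutation is in particular a DAG refutation.

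The main obstacle, if any, is merely to verify that the refutation constructed for Proposition~\ref{prop:Psi_n-short} is indeed a tree refutation: inspecting its three-stage construction (deriving $p(f_{z_{i,0,0}}(Y_n))\mu_i$, then $p(f_{z_{i_1,i_2,j}}(Y_n))\nu_{i_1,i_2,j}$, then resolving against $\overline{D_n^{Z_n}}$), each intermediate clause is used exactly once as a premise, so the tree form is preserved, and no additional work is needed. The separation then follows immediately.
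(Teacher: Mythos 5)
Your proposal is correct and is essentially the argument the paper intends: the theorem is obtained by combining Proposition~\ref{prop:Psi_n-short} (polynomial-size tree \Rfo{} refutations of $\SkQBFtoPL{\Omega_n}{p}{f}$) with the immediately preceding proposition (superpolynomial lower bound for \Rfo{} refutations of $\Omega'_n$ under the EPR translation), exactly as you describe. Your additional checks (that both translations act on the same QBF family and that the constructed refutation is indeed in tree form) are sensible but do not change the route.
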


Let us reconsider the family $(\hkb)_{t \geq 1}$ of QBFs from
\cite{DBLP:journals/iandc/BuningKF95}.  Formula $\hkbt{t}$ 
has the prefix $P_t\colon
\exists d_0 d_1 e_1 \forall x_1 \exists d_2 e_2 \forall x_2 \exists
d_3 e_3\ldots \forall x_{t-1} \exists d_t e_t \forall x_t \exists
f_{1}\ldots f_{t}
$
and the matrix $M_t$ consisting of the following clauses:
\begin{displaymath}
\begin{array}{lclclclcl}
C_0     & \colon & \overline{d}_0 & \hspace{0.5em} & 
C_1     & \colon & d_0 \lor \overline{d}_1 \lor \overline{e}_1 \\
C_{2j}   & \colon & d_j \lor \overline{x}_j \lor \overline{d}_{j+1} 
\lor \overline{e}_{j+1} & & 
C_{2j+1} & \colon & e_j \lor x_j \lor \overline{d}_{j+1} 
\lor \overline{e}_{j+1}
& \hspace{0.5em} & j=1,\ldots , t-1\\
C_{2t}   & \colon & d_t \lor \overline{x}_{t} \lor 
\overline{f}_{1}\lor \cdots  \lor \overline{f}_{t}
& &  
C_{2t+1} & \colon & e_t \lor x_{t} \lor \overline{f}_{1}\lor 
\cdots  \lor \overline{f}_{t}\\
B_{2j}  & \colon & \overline{x}_{j+1} \lor f_{j+1} & & 
B_{2j+1} & \colon & x_{j+1} \lor f_{j+1} & & j=0, \ldots , t-1
\end{array}
\end{displaymath}
By Theorem 3.2 in \cite{DBLP:journals/iandc/BuningKF95} and Theorem~6
in \cite{DBLP:conf/stacs/BeyersdorffCJ15}, any \qres{} refutation and
any \IRcalcPM{\cdot}{\cdot} refutation of \hkbt{t} is exponential
in $t$.  The formula $\hkbt{t}$ has a polynomial size Q-resolution
refutation if universal pivot variables are
allowed~\cite{DBLP:conf/cp/Gelder12}.

Let us extract Herbrand functions from such a short \qures{}
refutation of \hkbt{t} with the method of
\cite{DBLP:journals/fmsd/BalabanovJ12} resulting in $\overline{d_i}
\land e_i$ for $x_i$. We explain in the following how we can produce
short \IRcalcPMDSubst{P_t}{M_t}{\Delta}{\sigma_v}{} refutations using
such 
functions.

Let $\Delta \colon \delta_1,\ldots , \delta_t$ where $\delta_i$ is
$q_i \lor d_i \lor \overline{e}_i, \overline{q}_i \lor \overline{d}_i,
\overline{q}_i \lor e_i$, i.e., $\delta_i$ is the clausal
representation of $q_i \lequiv \overline{d}_i\land e_i$. The
quantifier  $\exists q_i$ is in the same quantifier block as
$d_i$ and $e_i$ and 
thus $\level{q_i}<\level{x_i}$. Consequently, $\sigma_v$
can replace $x_i$ by $q_i$.

\begin{proposition}\label{prop:short-proofs-HKB-in-IRcalcPMDeltaSubst}
Let $\Delta \colon \delta_1,\ldots , \delta_t$ where $\delta_i$ is
$q_i \lor d_i \lor \overline{e}_i, \overline{q}_i \lor \overline{d}_i,
\overline{q}_i \lor e_i$, i.e., $\delta_i$ is the clausal
representation of $q_i \lequiv \overline{d}_i\land e_i$. Let
$\sigma_{v,t}= \{x_i\is q_i \mid 1\leq i\leq t\}$.
There is a \emph{tree} refutation of \hkbt{t} in
\IRcalcPMDSubst{P_t}{M_t}{\Delta}{\sigma_{v,t}}{} of size 
polynomial in $t$.
\end{proposition}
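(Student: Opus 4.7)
The plan is to mimic in $\IRcalcPMDSubst{P_t}{M_t}{\Delta}{\sigma_{v,t}}$ the short propositional refutation one obtains by substituting the Herbrand function $\overline{d}_i\land e_i$ for $x_i$: the extension variable $q_i$ encodes this function, $\sigma_{v,t}$ performs the substitution at the axiom level, and the three clauses of each $\delta_i\in\Delta$ unfold $q_i$ back into $\overline{d}_i\land e_i$. Since every universal variable of $\hkbt{t}$ lies in $\dom{\sigma_{v,t}}$, no universal literal survives in any $C\sigma_{v,t}$, so the annotations produced by the axiom rule are empty and we can work with ordinary (unannotated) clauses throughout.

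First I would derive the key intermediate clauses. For $1\le j\le t-1$, the axiom rule yields $C_{2j}\sigma_{v,t}=d_j\lor\overline{q}_j\lor\overline{d}_{j+1}\lor\overline{e}_{j+1}$ and $C_{2j+1}\sigma_{v,t}=e_j\lor q_j\lor\overline{d}_{j+1}\lor\overline{e}_{j+1}$. Resolving $C_{2j}\sigma_{v,t}$ with the $\Delta$-clause $q_j\lor d_j\lor\overline{e}_j$ on $q_j$ (and factoring) produces $\alpha_j:=d_j\lor\overline{e}_j\lor\overline{d}_{j+1}\lor\overline{e}_{j+1}$; resolving $C_{2j+1}\sigma_{v,t}$ with $\overline{q}_j\lor e_j$ on $q_j$ produces $\beta_j:=e_j\lor\overline{d}_{j+1}\lor\overline{e}_{j+1}$; and resolving $\alpha_j$ with $\beta_j$ on $e_j$ produces $\eta_j:=d_j\lor\overline{d}_{j+1}\lor\overline{e}_{j+1}$. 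The case $j=t$ is analogous, yielding $\eta_t:=d_t\lor\overline{f}_1\lor\cdots\lor\overline{f}_t$ and $\beta_t:=e_t\lor\overline{f}_1\lor\cdots\lor\overline{f}_t$. A single $q_j$-resolution of $B_{2(j-1)}\sigma_{v,t}=\overline{q}_j\lor f_j$ with $B_{2(j-1)+1}\sigma_{v,t}=q_j\lor f_j$ (plus factoring) gives the unit clause $f_j$ for each $j=1,\ldots,t$.

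Second, I would chain these clauses around a single $d_0$-headed resolvent. Setting $\zeta_0:=C_1\sigma_{v,t}=d_0\lor\overline{d}_1\lor\overline{e}_1$, for $j=1,\ldots,t-1$ I derive $\zeta_j:=d_0\lor\overline{d}_{j+1}\lor\overline{e}_{j+1}$ in two resolution steps: resolve $\zeta_{j-1}$ with $\eta_j$ on $d_j$ to obtain $d_0\lor\overline{e}_j\lor\overline{d}_{j+1}\lor\overline{e}_{j+1}$, then resolve with $\beta_j$ on $e_j$ and factor the duplicated $\overline{d}_{j+1},\overline{e}_{j+1}$. Applying the same two-step chaining to $\zeta_{t-1}$ with $\eta_t$ and $\beta_t$ yields $d_0\lor\overline{f}_1\lor\cdots\lor\overline{f}_t$; successive resolutions with $f_1,\ldots,f_t$ produce the unit $d_0$, and a final resolution with $C_0\sigma_{v,t}=\overline{d}_0$ derives $\emptyclause$.

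The main obstacle is keeping this refutation in \emph{tree} form with polynomially many clauses. The critical design choice is to propagate one $d_0$-headed chain $\zeta_j$ rather than to derive each pair $(d_j,e_j)$ as separate unit clauses; a unit-by-unit induction would need both $d_{j+1}$ and $e_{j+1}$ to derive $d_j$ \emph{and} another copy of them to derive $e_j$, causing a $2^t$ blow-up in tree form. With the $\zeta$-chain each of the $O(t)$ auxiliary clauses $\alpha_j,\beta_j,\eta_j,f_j$ is used only a constant number of times, and each has an $O(1)$-clause derivation from the axioms, so duplicating subderivations to keep the proof tree-shaped costs only a constant factor. Since every clause has size $O(t)$ and there are $O(t)$ clauses, the total tree refutation has size $O(t^2)$, which is polynomial in $t$.
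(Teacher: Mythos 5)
Your proof is correct and follows essentially the same route as the paper's: instantiate $x_j$ by $q_j$ via $\sigma_{v,t}$, unfold $q_j$ using the clauses $q_j\lor d_j\lor\overline{e}_j$ and $\overline{q}_j\lor e_j$ from $\Delta$, chain through $j=1,\ldots,t$ to reach $\overline{f}_1\lor\cdots\lor\overline{f}_t$, and finish with the units $f_j$. The only (cosmetic) difference is that the paper resolves $C_0$ with $C_1$ at the outset and carries $\overline{d}_j\lor\overline{e}_j$ through the chain, whereas you carry the literal $d_0$ along in your $\zeta_j$ and discharge it against $C_0$ at the very end; both yield tree refutations of size $O(t^2)$.
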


\begin{proof}[sketch]
Derive $\overline{d}_1 \lor \overline{e}_1, \ldots , \overline{d}_t
\lor \overline{e}_t$.  The first clause is derived by a resolution
step between $C_0$ and $C_1$. Then we derive $\overline{d}_{j+1} \lor
\overline{e}_{j+1}$ from $\overline{d}_j \lor \overline{e}_j$,
$C_{2j}\sigma_{v,t}$, $C_{2j+1}\sigma_{v,t}$, and the clauses obtained
from $q_j\lequiv \overline{d}_j \lor e_j$ as follows.
Resolve $d_j \lor \overline{q}_j \lor \overline{d}_{j+1} \lor
\overline{e}_{j+1}$ with $q_j \lor d_j\lor \overline{e}_j$ and derive 
$d_j \lor \overline{e}_j \lor \overline{d}_{j+1} \lor
\overline{e}_{j+1}$ by resolution and factoring. Then continue with 
$\overline{d}_j \lor \overline{e}_j$ and obtain 
$R\colon \overline{e}_j \lor \overline{d}_{j+1} \lor \overline{e}_{j+1}$
by resolution and factoring.
Use $e_j \lor q_j \lor \overline{d}_{j+1} \lor \overline{e}_{j+1}$,
resolve it with $\overline{q}_j\lor e_j$ and factor the resolvent
resulting in $e_j \lor \overline{d}_{j+1} \lor \overline{e}_{j+1}$.
Resolve $R$ with the latter clause, factor the resolvent and obtain
$\overline{d}_{j+1} \lor \overline{e}_{j+1}$.

Each of the $15$ clauses has at most $5$ literals. For $j+1=t$, we have
a similar deduction but with at most $2t+3$ literals per clause. We
obtain $\overline{f}_1\lor \cdots \lor \overline{f}_t$ which can be
resolved by $f_i$ obtained from $\overline{q}_i \lor f_i$ and $q_i
\lor f_i$. Finally, it is easy to check that the refutation has tree
structure and is of size polynomial in $t$.
\qed
\end{proof}
The Herbrand functions obtained from \qres{} or \qures{} refutations
by the method in \cite{DBLP:journals/fmsd/BalabanovJ12} are often
(too) complex. It is easy to check that atomic Herbrand
functions $e_i$ for $x_i$ are sufficient and therefore a short tree 
\IRcalcPMSubst{\cdot}{\cdot}{\cdot} refutation of \hkbt{t} is
possible.  The proof of the following proposition can be found in the
appendix.

\begin{restatable}{prop}{treeRefAtomicFct}
\label{prop:short-proofs-HKB-in-IRcalcPM}
Let $\sigma_{v,t}= \{x_i\is e_i \mid 1\leq i\leq t\}$. Then there is a
\emph{tree} refutation of \hkbt{t} in
\IRcalcPMSubst{P_t}{M_t}{\sigma_{v,t}}{} of size polynomial in $t$.
\end{restatable}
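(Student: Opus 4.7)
The plan is to mimic the structure of the proof of Proposition~\ref{prop:short-proofs-HKB-in-IRcalcPMDeltaSubst} but avoid the propositional extensions entirely, using the fact that $\sigma_{v,t}$ replaces each universal $x_i$ directly by the existential $e_i$ of smaller level. This is legal because, in $P_t$, the block $\exists d_i e_i$ precedes $\forall x_i$, so $\level{e_i} < \level{x_i}$.

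First I would tabulate the axiom consequences under $\sigma_{v,t}$. Since every universal variable is in $\dom{\sigma_{v,t}}$, the assignment $\sigma$ from the axiom rule is empty. Thus the axioms provide $C_0$, $C_1$, together with
\begin{displaymath}
C_{2j}\sigma_{v,t}\colon d_j \lor \overline{e}_j \lor \overline{d}_{j+1} \lor \overline{e}_{j+1}, \qquad
C_{2j+1}\sigma_{v,t}\colon e_j \lor \overline{d}_{j+1} \lor \overline{e}_{j+1}
\end{displaymath}
for $1\leq j\leq t-1$ (the second clause follows after one \Fac{} collapsing the duplicate $e_j$); the analogous clauses for $j=t$ with the $\overline{f}_i$-tail; and $B_{2k}\sigma_{v,t}\colon \overline{e}_{k+1} \lor f_{k+1}$, $B_{2k+1}\sigma_{v,t}\colon e_{k+1} \lor f_{k+1}$ for $0\leq k\leq t-1$.

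Next I would derive the intermediate clauses $R_j\colon \overline{d}_j \lor \overline{e}_j$ by induction on $j$. The base case $R_1$ follows by one resolution between $C_0$ and $C_1$. For the step, assuming $R_j$, I resolve $R_j$ with $C_{2j}\sigma_{v,t}$ on the pivot $d_j$ and factor the duplicate $\overline{e}_j$ to obtain $\overline{e}_j \lor \overline{d}_{j+1}\lor \overline{e}_{j+1}$, then resolve with $C_{2j+1}\sigma_{v,t}$ on $e_j$ (and factor) to obtain $R_{j+1}$. Each step consumes $R_j$ exactly once, so the whole chain is a tree using a constant number of clauses per level and axioms can be re-generated freely. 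An analogous derivation at level $t$ using $C_{2t}\sigma_{v,t}$ and $C_{2t+1}\sigma_{v,t}$ yields $\overline{f}_1\lor \cdots \lor \overline{f}_t$. In parallel, for each $i\in\{1,\ldots,t\}$ I resolve $B_{2(i-1)}\sigma_{v,t}$ with $B_{2(i-1)+1}\sigma_{v,t}$ on $e_i$ and factor to obtain the unit clause $f_i$. These $t$ unit derivations are independent and hence tree-structured.

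Finally I would resolve $\overline{f}_1\lor \cdots \lor \overline{f}_t$ successively with the units $f_1,\ldots,f_t$ to yield $\emptyclause$; each unit is consumed exactly once, preserving tree form. A straightforward size accounting shows $O(t)$ intermediate clauses, each of size at most $O(t)$, giving an overall size polynomial in $t$. The main subtlety to watch is checking that every \Fac{} step is legal after $\sigma$ collapses to the empty assignment and that no derived clause is reused, so that the whole refutation indeed forms a tree; both are immediate from the construction above.
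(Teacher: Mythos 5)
Your proposal is correct and follows essentially the same route as the paper's proof: derive the chain $\overline{d}_j\lor\overline{e}_j$ by resolving with $C_{2j}\sigma_{v,t}$ and the factored $C_{2j+1}\sigma_{v,t}$, obtain $\overline{f}_1\lor\cdots\lor\overline{f}_t$ at level $t$, and discharge it with the unit clauses $f_i$ produced from the $B$-clauses. The only addition is your explicit observation that the axiom annotations are empty since every $x_i$ lies in $\dom{\sigma_{v,t}}$, which the paper leaves implicit.
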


\begin{proposition}\label{prop:IRcalcPM-npsim-IRcalcPMSubst}
\IRcalcPM{\cdot}{\cdot}{} cannot polynomially simulate
\IRcalcPMSubst{\cdot}{\cdot}{\cdot}.
\end{proposition}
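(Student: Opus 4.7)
The plan is to exhibit a family of QBFs that has polynomial-size refutations in \IRcalcPMSubst{\cdot}{\cdot}{\cdot}{} but requires super-polynomial size in \IRcalcPM{\cdot}{\cdot}{}. The family $(\hkbt{t})_{t\geq 1}$ of Büning--Karpinski--Flögel introduced just above is the natural candidate, since the excerpt has already supplied both of the ingredients we need.

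First, I would invoke Theorem~6 of \cite{DBLP:conf/stacs/BeyersdorffCJ15}, which states that every \IRcalcPM{\cdot}{\cdot}{} refutation of $\hkbt{t}$ has size exponential in $t$. This gives the required lower bound for the simulating calculus and is crucial: the separation fails without it, since the existence of short \IRcalcPMSubst{\cdot}{\cdot}{\cdot}{} proofs alone says nothing about whether \IRcalcPM{\cdot}{\cdot}{} can keep up.

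Second, I would invoke Proposition~\ref{prop:short-proofs-HKB-in-IRcalcPM}, which provides a \emph{tree} refutation of $\hkbt{t}$ in \IRcalcPMSubst{P_t}{M_t}{\sigma_{v,t}}{} of size polynomial in $t$, using the substitution $\sigma_{v,t}=\{x_i\is e_i\mid 1\leq i\leq t\}$. Here the point is that the levels satisfy $\level{e_i}<\level{x_i}$, so $\sigma_{v,t}$ is a legal global substitution in the sense of Fig.~\ref{fig:IR-CalcPM-enhanced}, and thus the short proof is indeed an \IRcalcPMSubst{\cdot}{\cdot}{\cdot}{} refutation.

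Combining the two facts gives the separation: if \IRcalcPM{\cdot}{\cdot}{} could polynomially simulate \IRcalcPMSubst{\cdot}{\cdot}{\cdot}{}, then translating the polynomial-size refutations of $\hkbt{t}$ from Proposition~\ref{prop:short-proofs-HKB-in-IRcalcPM} would yield polynomial-size \IRcalcPM{\cdot}{\cdot}{} refutations of $\hkbt{t}$, contradicting the exponential lower bound from \cite{DBLP:conf/stacs/BeyersdorffCJ15}. The only real obstacle is ensuring the upper bound genuinely lies in \IRcalcPMSubst{\cdot}{\cdot}{\cdot}{} (and not in the further extended \IRcalcPMDSubst{\cdot}{\cdot}{\cdot}{\cdot}{}); this is already addressed by Proposition~\ref{prop:short-proofs-HKB-in-IRcalcPM}, whose proof shows that atomic Herbrand functions $e_i$ for $x_i$ suffice, eliminating the need for the extension clauses $\Delta$. \qed
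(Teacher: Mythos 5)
Your proposal is correct and follows essentially the same route as the paper: the separation is obtained by combining the exponential lower bound for \IRcalcPM{\cdot}{\cdot}{} refutations of $\hkbt{t}$ from Theorem~6 of \cite{DBLP:conf/stacs/BeyersdorffCJ15} with the polynomial-size tree refutations in \IRcalcPMSubst{P_t}{M_t}{\sigma_{v,t}}{} given by Proposition~\ref{prop:short-proofs-HKB-in-IRcalcPM}. Your additional remark that the upper bound uses only atomic substitutions $x_i\is e_i$ (and hence no extension clauses $\Delta$) correctly pins down why the short proofs live in \IRcalcPMSubst{\cdot}{\cdot}{\cdot}{} rather than the stronger \IRcalcPMDSubst{\cdot}{\cdot}{\cdot}{\cdot}{}.
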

\noindent
According to Proposition~\ref{prop:short-proofs-HKB-in-IRcalcPM},
there are not only short tree refutations of \hkbt{t}, 
but also the search space is
limited if a simple heuristic restricting the number of possible
variable replacements $\sigma_{v,t}$ is employed during proof search.
The heuristic requires that for each $(x\is e)\in \sigma_{v,t}$, there
is at least one clause $C\sigma_{v,t}$, which contain duplicate
literals.

\section{Conclusion}\label{sec:concl}

We studied various calculi for QBFs with respect to their relative
strength. We provided polynomial simulations using first-order
translations in order to clarify the possibility to employ
(non-trivial) instantiations in refutations.  By a simulation of
\qres{} and \qures{} by \Rfo, we have seen that the former ones avoid
instantiations.  The simulation of simple instantiation-based calculi
by \Rfo{} revealed that instantiation of universal variables is
possible by resolutions with $p(f_1)$ and $\neg p(f_0)$ together with
the usual propagation of substitutions, and clarified the purpose of
the employed framework of assignments and annotated clauses. We showed
that enabling instantiations with existential variables and formulas
increase the strength of instantiation-based calculi. For
presentational reasons, we have chosen a rather simple approach where
$\sigma_v$ and $\Delta$ are initially given, but it is possible in the
underlying framework to generate $\sigma_v$ and $\Delta$ dynamically.

\medskip
\noindent
\emph{Open problems and future research directions:} 
In all our comparisons, we did not optimize the quantifier prefix by
(advanced) dependency schemes. It is well known that less dependencies
between variables can considerably shorten proofs, for which reason
one would like to integrate these techniques into calculi. We have
left open some proof-theoretical comparisons like sequent systems for
prenex formulas with propositional cuts and
\IRcalcPMDSubst{\cdot}{\cdot}{\cdot}{\cdot} or \IRMcalc{}
\cite{DBLP:conf/mfcs/BeyersdorffCJ14} with our new calculi or
\Rfo. The problem here is that \Rfo{} is probably not strong enough
because inference rules for Skolem function manipulation
\cite{Eder:1992,DBLP:books/el/RV01/BaazEL01} are not available but
seem to be necessary for a polynomial simulation.  The ultimate goal
is to make instantiation-based calculi ready for proof search.  A
first step has been accomplished by showing (in the simulation) that
unrestricted instantiations in \IRcalcPM{\cdot}{\cdot} can be
restricted to minimal ones by simply using unification and mgus like
in the first-order case.  Achieving the goal for strong cacluli is not
an easy exercise because some techniques like extensions are hard to
control.


\bibliographystyle{plain}

\begin{thebibliography}{10}

\bibitem{DBLP:books/el/RV01/BaazEL01}
M.~Baaz, U.~Egly, and A.~Leitsch.
\newblock Normal form transformations.
\newblock In J.~A. Robinson and A.~Voronkov, editors, {\em Handbook of
  Automated Reasoning}, pages 273--333. Elsevier and MIT Press, 2001.

\bibitem{DBLP:journals/fmsd/BalabanovJ12}
V.~Balabanov and J.-H.~R. Jiang.
\newblock Unified {QBF} certification and its applications.
\newblock {\em Formal Methods in System Design}, 41(1):45--65, 2012.

\bibitem{BWJ:SAT14}
V.~Balabanov, M.~Widl, and J.-H.~R. Jiang.
\newblock {QBF} resolution systems and their proof complexities.
\newblock In {\em SAT}, 2014.

\bibitem{DBLP:conf/mfcs/BeyersdorffCJ14}
O.~Beyersdorff, L.~Chew, and M.~Janota.
\newblock On unification of {QBF} resolution-based calculi.
\newblock In E.~Csuhaj{-}Varj{\'{u}}, M.~Dietzfelbinger, and Z.~{\'{E}}sik,
  editors, {\em Mathematical Foundations of Computer Science 2014 - 39th
  International Symposium, {MFCS} 2014, Budapest, Hungary, August 25-29, 2014.
  Proceedings, Part {II}}, volume 8635 of {\em Lecture Notes in Computer
  Science}, pages 81--93. Springer, 2014.

\bibitem{DBLP:conf/stacs/BeyersdorffCJ15}
O.~Beyersdorff, L.~Chew, and M.~Janota.
\newblock Proof complexity of resolution-based {QBF} calculi.
\newblock In E.~W. Mayr and N.~Ollinger, editors, {\em 32nd International
  Symposium on Theoretical Aspects of Computer Science, {STACS} 2015, March
  4-7, 2015, Garching, Germany}, volume~30 of {\em LIPIcs}, pages 76--89.
  Schloss Dagstuhl - Leibniz-Zentrum fuer Informatik, 2015.

\bibitem{BCJ-AAAI-WS16}
O.~Beyersdorff, L.~Chew, and M.~Janota.
\newblock Extension variables in {QBF} resolution.
\newblock In {\em AAAI-16 workshop Beyond NP}, 2016.

\bibitem{DBLP:journals/aml/CookM05}
S.~A. Cook and T.~Morioka.
\newblock Quantified propositional calculus and a second-order theory for
  {NC}$^{\mbox{1}}$.
\newblock {\em Arch. Math. Log.}, 44(6):711--749, 2005.

\bibitem{Eder:1992}
E.~Eder.
\newblock {\em Relative complexities of first order calculi}.
\newblock Artificial intelligence = K{\"u}nstliche Intelligenz. Vieweg, 1992.

\bibitem{DBLP:conf/sat/Egly12}
U.~Egly.
\newblock On sequent systems and resolution for {QBFs}.
\newblock In A.~Cimatti and R.~Sebastiani, editors, {\em SAT}, volume 7317 of
  {\em Lecture Notes in Computer Science}, pages 100--113. Springer, 2012.

\bibitem{ESW-cj-2009}
U.~Egly, M.~Seidl, and S.~Woltran.
\newblock A solver for {QBF}s in negation normal form.
\newblock {\em Constraints}, 14(1):38--79, 2009.

\bibitem{DBLP:conf/cp/Gelder12}
A.~Van Gelder.
\newblock Contributions to the theory of practical quantified boolean formula
  solving.
\newblock In M.~Milano, editor, {\em CP}, volume 7514 of {\em Lecture Notes in
  Computer Science}, pages 647--663. Springer, 2012.

\bibitem{DBLP:journals/tcs/Haken85}
A.~Haken.
\newblock The intractability of resolution.
\newblock {\em Theor. Comput. Sci.}, 39:297--308, 1985.

\bibitem{DBLP:journals/iandc/BuningKF95}
H.~Kleine~B{\"u}ning, M.~Karpinski, and A.~Fl{\"o}gel.
\newblock Resolution for quantified {B}oolean formulas.
\newblock {\em Inf. Comput.}, 117(1):12--18, 1995.

\bibitem{Krajicek:1995}
J.~Kraj\'{i}\v{c}ek.
\newblock {\em Bounded Arithmetic, Propositional Logic, and Complexity Theory},
  volume~60 of {\em Encyclopedia of Mathematics and its Application}.
\newblock Cambridge University Press, 1995.

\bibitem{Leitsch-resolution1997}
A.~Leitsch.
\newblock {\em The resolution calculus}.
\newblock Texts in theoretical computer science. Springer, 1997.

\bibitem{DBLP:journals/jsc/PlaistedG86}
D.~A. Plaisted and S.~Greenbaum.
\newblock A structure-preserving clause form translation.
\newblock {\em J. Symb. Comput.}, 2(3):293--304, 1986.

\bibitem{DBLP:conf/cade/SeidlLB12}
M.~Seidl, F.~Lonsing, and A.~Biere.
\newblock qbf2epr: A tool for generating {EPR} formulas from {QBF}.
\newblock In P.~Fontaine, R.~A. Schmidt, and S.~Schulz, editors, {\em
  PAAR@IJCAR}, volume~21 of {\em EPiC Series}, pages 139--148. EasyChair, 2012.

\bibitem{SlivovskySzeider-SAT14}
F.~Slivovsky and S.~Szeider.
\newblock Variable dependencies and {Q}-resolution.
\newblock In C.~Sinz and U.~Egly, editors, {\em Theory and Applications of
  Satisfiability Testing - {SAT} 2014 - 17th International Conference, Held as
  Part of the Vienna Summer of Logic, {VSL} 2014, Vienna, Austria, July 14-17,
  2014. Proceedings}, volume 8561 of {\em Lecture Notes in Computer Science},
  pages 269--284. Springer, 2014.

\bibitem{Tseitin:1968}
G.~S. Tseitin.
\newblock {O}n the {C}omplexity of {D}erivation in {P}ropositional {C}alculus.
\newblock In A.~O. Slisenko, editor, {\em Studies in Constructive Mathematics
  and Mathematical Logic, Part II}, pages 234--259. Seminars in Mathematics,
  V.A.\ Steklov Mathematical Institute, vol.\ 8, Leningrad, 1968.

\bibitem{DBLP:conf/iccad/ZhangM02}
L.~Zhang and S.~Malik.
\newblock Conflict driven learning in a quantified boolean satisfiability
  solver.
\newblock In L.~T. Pileggi and A.~Kuehlmann, editors, {\em Proceedings of the
  2002 {IEEE/ACM} International Conference on Computer-aided Design, {ICCAD}
  2002, San Jose, California, USA, November 10-14, 2002}, pages 442--449. {ACM}
  / {IEEE} Computer Society, 2002.

\end{thebibliography}

\newpage
\appendix

\section{Proof of some propositions and theorems}

\QBFIsomorphic*

\begin{proof}
The proof is by induction on the logical complexity,
$\lc{\Phi}$, of $\Phi$.

\medskip
\noindent
\base: $\lc{\Phi} = 0$. Then $\Phi$ is $\falsum$, $\verum$ or a Boolean
variable $q$ and $\QBFtoPL{\Phi}{p}{f}$ is $p(f_{0})$,  
$p(f_{1})$ or $p(x_q)$. Then $\Phi \cong \QBFtoPL{\Phi}{p}{f}$.

\medskip
\noindent
\ih: For all QBFs $\Psi$ with $\lc{\Psi} < k$, $\Psi\cong
\QBFtoPL{\Psi}{p}{f}$, i.e., $\Psi$ and $\QBFtoPL{\Psi}{p}{f}$ are isomorphic.

\medskip
\noindent
\step: Consider QBF $\Phi$ with $\lc{\Phi}=k$. In all cases below, 
$\Phi_i\cong \QBFtoPL{\Phi_i}{p}{f}$ holds
($i=1,2$) by the induction hypothesis.

\medskip
\noindent
\sccase{1}: $\Phi = \neg \Phi_1$. Since $\Phi_1 \cong \QBFtoPL{\Phi_1}{p}{f}$,
$\neg \Phi_1 \cong \neg \QBFtoPL{\Phi_1}{p}{f} = \QBFtoPL{\neg \Phi_1}{p}{f}$
and therefore $\Phi \cong \QBFtoPL{\Phi}{p}{f}$ holds.

\medskip
\noindent
\sccase{2}: $\Phi = \Phi_1 \circ \Phi_2$. Since $\Phi_1\cong
\QBFtoPL{\Phi_1}{p}{f}$ as well as $\Phi_2 \cong \QBFtoPL{\Phi_2}{p}{f}$,
$\Phi_1\circ \Phi_2 \cong \QBFtoPL{\Phi_1}{p}{f} \circ \QBFtoPL{\Phi_2}{p}{f}
= \QBFtoPL{\Phi_1\circ \Phi_2}{p}{f}$ and therefore $\Phi \cong
\QBFtoPL{\Phi}{p}{f}$ holds.

\medskip
\noindent
\sccase{3}: $\Phi = \quantifier^b q \, \Phi_1$. Since $\Phi_1 \cong
\QBFtoPL{\Phi_1}{p}{f}$, $\quantifier^b q\, \Phi_1 \cong \quantifier q\,
\QBFtoPL{\Phi_1}{p}{f} = \QBFtoPL{\quantifier^b q\, \Phi_1}{p}{f}$ and
therefore $\Phi \cong \QBFtoPL{\Phi}{p}{f}$ holds.
\qed
\end{proof}

\SatEquiv*

\begin{proof}[sketch]
\noindent
$\Longrightarrow$: $\Phi$ is satisfiable.  We show that
$\QBFtoPL{\Phi}{p}{f}\land p(f_{1}) \land \neg p(f_{0})$ has a model
with a two-element domain $\caU=\{f_{1}, f_{0}\}$ and constants are
mapped to itself by the interpretation function. Moreover, $p(f_{1})$
has to be true and $p(f_{0})$ has to be false.  If we evaluate $\Phi$
according to the semantics, we can, in a parallel way, expand
$\QBFtoPL{\Phi}{p}{f}$ over $\caU$ and obtain two isomorphic expanded
formulas. Evaluating isomorphic leaves in the same way and propagating
the truth values from the leaves to the root (in the corresponding
formula trees) yields the same evaluation result for both formulas.
Hence, $\QBFtoPL{\Phi}{p}{f}\land p(f_{1}) \land \neg p(f_{0})$ is
satisfiable.

\noindent
$\Longleftarrow$: $\Phi$ is unsatisfiable. Then there is a logically
equivalent PCNF $\Phi'$ and a \qres{} refutation of $\Phi'$
(because \qres{} is complete). Due to
Proposition~\ref{prop:isomorphic} and the preservation of the quantifiers and connectives by $\QBFtoPL{\cdot}{p}{f}$, there is an isomorphic PCNF
$\Phi_1'$ of $\QBFtoPL{\Phi}{p}{f}$ where $\Phi_1'$ is logically
equivalent to $\QBFtoPL{\Phi}{p}{f}$. Skolemization yields the
sat-equivalent first-order clause form $\Phi_1''$ of $\Phi_1'$. In
Corollary~\ref{cor:Rfo-psim-qres}, we show that we can simulate
each \qres{} refutation of $\Phi'$ by a first-order resolution
refutation of $\Phi_1'' \land p(f_{1}) \land \neg p(f_{0})$. By
soundness of first-order resolution, we conclude that $\Phi_1''\land
p(f_{1}) \land \neg p(f_{0})$ and therefore $\QBFtoPL{\Phi}{p}{f}\land
p(f_{1}) \land \neg p(f_{0})$ is unsatisfiable.  \qed
\end{proof}

\RfoPsimQU*

\begin{proof}
Let $\Phi\colon \quantifier^b\, M$ be a QBF in PCNF with quantifier
prefix $\quantifier^b$ and matrix $M$.  Consider the first-order
translation $\QBFtoPL{\Phi}{p}{f}$ of $\Phi$ and
$\SkQBFtoPL{\Phi}{p}{f}$ (the skolemized form of
$\QBFtoPL{\Phi}{p}{f}$). By Proposition~2, every literal in $M$ has an
isomorphic counterpart in $\SkQBFtoPL{M}{p}{f}$. We employ this
isomorphism in the following.

Let $C_1, \ldots , C_n$ be a \qures{} deduction of $C_n$.  For any
clause $C_i$ ($1\leq i\leq n$) of the form $L_{i,1}\lor \cdots \lor
L_{i,m_i}$ generate a first-order clause $D_i$ of the form $K_{i,1}\lor
\cdots \lor K_{i,m_i}$ where $K_{i,j} \cong L_{i,j}$ for $j=1,\ldots ,
m_i$.  We show by induction on $n$ that there exists an \Rfo{}
deduction $p(f_{1}), \neg p(f_{0}), E_1,\ldots , E_n$ of $E_n$ from $
\SkQBFtoPL{M}{p}{f} \land p(f_{1}) \land \neg p(f_{0})$ such that the
following holds for all $i=1,\ldots, n$.
\begin{enumerate}
\item \label{item:cond-En-nontaut} $E_i$ is non-tautological.


\item \label{item:cond2-En-more-general} 
$D_i = E_i\sigma$ for some variable substitution $\sigma$.
\end{enumerate}
Condition~\ref{item:cond2-En-more-general} implies that all $E_i$ are
not instantiated with non-variable terms.

\medskip
\noindent
\base: $n=1$. Then $C_1$ is an input clause from $M$, $C_1$ in
non-tautological by assumption (of \qures), and $D_1$ is a first-order
input clause with $C_1\cong D_1$. Take $E_1=D_1$ and $D_1=E_1\sigma$
where $\sigma=\epsilon$.

\medskip
\noindent
\ih: Suppose $n\geq 1$ and for all $k\leq n$, we have based on
$C_1,\ldots , C_k$ and $D_1, \ldots , D_k$ an \Rfo{} deduction
$p(f_{1}), \neg p(f_{0}), E_1,\ldots , E_k$ of $E_k$ from
$\SkQBFtoPL{\Phi}{p}{f} \land p(f_{1}) \land \neg p(f_{0})$ such that
conditions~\ref{item:cond-En-nontaut}.\ and
\ref{item:cond2-En-more-general}.\ hold.

\medskip
\noindent
\step: Consider $C_1, \ldots , C_{n+1}$ and  $D_1, \ldots , D_{n+1}$.

\medskip
\sccase{1}: $C_{n+1}$ is an input clause. Then proceed as in the base case.

\medskip
\sccase{2}: $C_{n+1}$ is the consequence of a $\forall$ reduction
applied to $C_i$ ($i\leq n$). Let $\ell$ be the universal literal
removed. Without loss of generality, let $\ell$ be positive and of the
form $x$.  Then there is a clause $D_i\colon \widetilde{D}_i \lor
p(x)$. Observe that the variable $x$ does not occur in
$\widetilde{D}_i$, because we assume by
Remark~\ref{rem:clauses-as-sets} applications of \Fac{} as early as
possible.
By \ih, we have a non-tautological clause $E_i\colon
\widetilde{E_i}\lor p(y)$ and a variable substitution $\sigma$ with
$D_i = E_i\sigma$.  $E_{n+1}$ is obtained from $E_i$ and $\neg
p(f_{0})$ by resolution resulting in $\widetilde{E_i}$. Then $D_{n+1}
= E_{n+1}\sigma$ and $E_{n+1}$ is non-tautological because $E_i$ is
non-tautological.

\medskip
\sccase{3}: $C_{n+1}$ is a factor of $C_i$ ($i\leq n$).  Then there is
a clause $D_i\colon \widetilde{D}_i \lor \ell(t) \lor \ell(t)$
where $\ell(t)$ is a literal with predicate symbol $p$ with a term $t$
as argument.  By \ih, we have a non-tautological clause $E_i$ and a
variable substitution $\sigma$ with $D_i = E_i\sigma$. If $t$ is a
constant, then $E_{n+1}$ is $E_i$ with one occurrence of $\ell(t)$
removed, $E_i$ is non-tautological and so is $E_{n+1}$ and $D_{n+1} =
E_{k+1}\sigma$.

Let the term $t$ be of the form $f(\vec{X})$. Then $E_i$ is 
$\widetilde{E}_i \lor \ell(f(\vec{Y})) \lor \ell(f(\vec{Z}))$ and
$\sigma(u_r)= x_r$ for all $u_r \in \vec{Y} \cup \vec{Z}$.
Let $\pi$ be the unifier of $\{\ell(f(\vec{Y})), \ell(f(\vec{Z}))\}$
of the form $\{y_i\backslash z_i \mid \text{for all $y_i\in
  \vec{Y}$}\}$. The factor $E_{n+1}$ is then $(\widetilde{E}_i \lor
\ell(f(\vec{Z})))\pi$ and $D_{n+1} = E_{k+1}\sigma$ holds.

We argue in the following that $E_{n+1}$ is non-tautological.  Suppose
$E_{n+1}$ is tautological. Then, since $D_{n+1} = E_{n+1}\sigma$,
$D_{n+1}$ is tautological which in turn implies that $C_{n+1}$ is
tautological. But this is impossible by the definition of \qres{} and
\qures.


Let  $t$ be a variable $x$. Then this case is similar to the case
$t=f(\vec{X})$. 
 
\medskip
\sccase{4}: $C_{n+1}$ is a Q-resolvent of $C_i$ and $C_j$ ($i,j\leq
n$) upon the existential variable $e$.  Then there are two clause
$D_i\colon \widetilde{D}_i \lor p(t_e)$ and $D_j\colon \widetilde{D}_j
\lor \neg p(t_e)$. By \ih, we have non-tautological clauses $E_i$ with
$D_i = E_i\sigma_1$ and $E_j$ with $D_j = E_j\sigma_2$ where
$\sigma_1$ as well as $\sigma_2$ are variable substitutions.

\scsubcase{4}{1}: $t_e$ is a functional term $f_e(\vec{X})$.  Then

\begin{tabular}{rllcl}
$E_i$ & $\colon$ & $\widetilde{E}_i \lor p(f_e(\vec{Y}))$ & &
and 
$\sigma_1(y_r)=x_r$ for all  $y_r \in \vec{Y}$;
\\
$E_j$ & $\colon$ & $\widetilde{E}_j \lor \neg p(f_e(\vec{Z}))$ & &
and 
$\sigma_2(z_r)=x_r$ for all  $z_r \in \vec{Z}$.
\\ 
\end{tabular}

\noindent
Let $\mu$ be a renaming substitution such that $E_i\mu$ and $E_j$ are
variable-disjoint.  In order to construct the resolvent, we need the
mgu $\pi$ of $\{p(f_e(\vec{Y}))\mu, p(f_e(\vec{Z}))\}$, which is
$\{\mu(y_r) \backslash z_r \mid \text{for all $y_r\in
  \vec{Y}$}\}$. The unifier $\pi$ is a matcher; it affects only
variables from $E_i\mu$. The resolvent $E_{n+1}$ is then
$\widetilde{E}_i \mu \pi \lor \widetilde{E}_j$.

We show that there exists a variable substitution $\sigma$ such that
$D_{n+1} = E_{n+1}\sigma$. First observe that $D_i = E_i\mu\sigma_1'$
with $ \sigma_1' = \{\mu(u) \backslash \sigma_1(u) \mid \text{for all
  $u\in \var{E_i}$}\}\setminus \{u\is u\mid \text{$u$ is a
  variable}\}$.  Then with $\sigma_1'' = \{\pi(\mu(u)) \backslash
\sigma_1(u) \mid \text{for all $u\in \var{E_i}$}\}\setminus \{u\is
u\mid \text{$u$ is a variable}\}$, we have $\widetilde{D}_i =
\widetilde{E}_i\mu\pi\sigma_1''$.  For all $y_i\in \vec{Y}$, we have
$\pi(\mu(y_i)) = z_i$, $\sigma_1(y_i) = x_i$ and $\sigma_2(z_i) =
x_i$.
Then
\begin{eqnarray*}
\widetilde{D}_i \lor \widetilde{D}_j  
& = & 
\widetilde{E}_i\mu\pi\sigma_1''
\lor \widetilde{E}_j\sigma_2
\, \, = \, \, (\widetilde{E}_i\mu\pi \lor \widetilde{E}_j)\sigma
\end{eqnarray*}
where $\sigma$ is obtained from 
$$\{\mu(u) \backslash \sigma_1(u) \mid \text{for all $u\in \var{E_i}
  \setminus \vec{Y}$}\} \cup \{v\backslash \sigma_2(v) \mid \text{for
  all $v\in\var{E_j}\}$}\}
$$ by deleting all elements of the form $u\is u$.  Observe that
$\range{\pi}=\{\vec{Z}\} \subseteq \var{E_j}$ and $\range{\pi}
\subseteq \dom{\sigma_2}$.
Therefore $D_{n+1} = E_{n+1}\sigma$.

\scsubcase{4}{2}: $t_e$ is a constant.  Similar to \scsubcase{4}{1}
but with an empty mgu $\pi$.

The clause $E_{n+1}$ from both subcases is non-tautological by the
same reason as in \sccase{3}.

\medskip
\sccase{5}: $C_{n+1}$ is a Q-resolvent of $C_i$ and $C_j$ ($i,j\leq
k$) upon the universal variable $u$. Similar to \scsubcase{4}{1}.
\qed
\end{proof}

\treeRefAtomicFct*

\begin{proof}[sketch]
Take $\sigma_{v,t}=\{x_i\is e_i \mid 1\leq i\leq t\}$ and derive
$\overline{d}_1 \lor \overline{e}_1, \ldots ,
\overline{d}_t \lor \overline{e}_t$.
The first clause is derived by a resolution step between $C_0$ and
$C_1$. Then we derive $\overline{d}_{j+1} \lor \overline{e}_{j+1}$
from $\overline{d}_j \lor \overline{e}_j$ and $C_{2j}\sigma_{v,t}$ and
$C_{2j+1}\sigma_{v,t}$ as follows.
Resolve $\overline{d}_j \lor \overline{e}_j$ and $d_j \lor
\overline{e}_j \lor \overline{d}_{j+1} \lor \overline{e}_{j+1}$,
obtain $\overline{e}_j \lor \overline{e}_j \lor \overline{d}_{j+1}
\lor \overline{e}_{j+1}$ and factor it to get $R\colon \overline{e}_j
\lor \overline{d}_{j+1} \lor \overline{e}_{j+1}$. Next factor $e_j
\lor e_j \lor \overline{d}_{j+1} \lor \overline{e}_{j+1}$ and get $e_j
\lor \overline{d}_{j+1} \lor \overline{e}_{j+1}$. Resolve the latter
with $R$ and factor the resolvent. We get
$\overline{d}_{j+1} \lor \overline{e}_{j+1} $
Each of the $8$ clauses has at most $4$ literals. For $j+1=t$, we have
a similar deduction but with at most $2t+2$ literals per clause. We
obtain $\overline{f}_1\lor \cdots \lor \overline{f}_t$ which can be
resolved by the $f_i$ obtained from $\overline{e}_i \lor f_i$ and $e_i
\lor f_i$. Finally, it is easy to check that the refutation has tree
structure and is of size polynomial in $t$.
\qed
\end{proof}

\end{document}